\theoremstyle{plain}
\newtheorem{theorem}{Theorem}
\newtheorem{corollary}[theorem]{Corollary}
\newtheorem{lemma}[theorem]{Lemma}
\theoremstyle{definition}
\newtheorem{definition}[theorem]{Definition}
\newtheorem*{example*}{Example}
\DeclareMathOperator{\poly}{{\rm poly}}
\DeclareMathOperator{\rank}{{\rm rank}}
\newcommand{\FPT}{{\sf FPT}\xspace}
\newcommand{\NP}{{\sf NP}\xspace}
\newcommand{\transpose}{\intercal}
\newcommand{\hy}{\hbox{-}\nobreak\hskip0pt}
\def\ve#1{\mathchoice{\mbox{\boldmath$\displaystyle\bf#1$}}
{\mbox{\boldmath$\textstyle\bf#1$}}
{\mbox{\boldmath$\scriptstyle\bf#1$}}
{\mbox{\boldmath$\scriptscriptstyle\bf#1$}}}
\newcommand\vea{{\ve a}}
\newcommand\veb{{\ve b}}
\newcommand\vecc{{\ve c}}
\newcommand\vece{{\ve e}}
\newcommand\ven{{\ve n}}
\newcommand\vem{{\ve m}}
\newcommand\ves{{\ve s}}
\newcommand\vex{{\ve x}}
\newcommand\vey{{\ve y}}
\newcommand\veDelta{{\ve \Delta}}
\def\Z{\mathbb{Z}}
\def\N{\mathbb{N}}
\def \P {\mathcal{P}}
\def \RR {\mathcal{R}}
\def \l {\langle}
\def \r {\rangle}
\newcommand{\pref}{\ensuremath{\succ}}
\newcommand{\Oh}{O}
\newcommand{\CC}{\mathcal{C}}
\newcommand{\prob}[3]{
    \noindent
    \begin{center}
      \fbox{
        \begin{minipage}{.96\linewidth}
          \begin{tabularx}{\columnwidth}{lX}
	        \multicolumn{2}{l}{#1}\\
	        {\bf Input:}&{#2}\\
	        {\bf Find:}&{#3}
          \end{tabularx}
        \end{minipage}
      }
    \end{center}
}
\newcommand{\fpt}{fixed\hy{}pa\-ra\-me\-ter trac\-ta\-ble\xspace}
\newcommand{\OhOp}[1]{O\mathopen{}\mathclose\bgroup\left( #1 \aftergroup\egroup\right)}
\newcommand{\heading}[1]{\smallskip\textit{#1.}}
\begin{document}
\title{%
A Unifying Framework for Manipulation Problems
\thanks{Research supported by CE-ITI grant project P202/12/G061 of GA~{\v C}R, project NFR MULTIVAL, and ERC Starting Grant 306465 (BeyondWorstCase).
}}

\date{}

\author{
Dušan Knop\thanks{\texttt{dusan.knop@uib.no} Department of Applied Mathematics, Charles University, Prague, Czech Republic \emph{and} Department of Informatics, University of Bergen, Bergen, Norway}
\and
Martin Koutecký\thanks{\texttt{koutecky@technion.ac.il} Technion - Israel Institute of Technology, Haifa, Israel}
\and
Matthias Mnich\thanks{\texttt{mmnich@uni-bonn.de} Institut f{\"u}r Informatik, Universit\"at Bonn, Bonn, Germany \emph{and} Department of Quantitative Economics, Maastricht University, Maastricht, The~Netherlands}
}

\bibliographystyle{alpha}      
\maketitle
\begin{abstract}
  Manipulation models for electoral systems are a core research theme in social choice theory; they include bribery (unweighted, weighted, swap, shift, \dots), control (by adding or deleting voters or candidates), lobbying in referenda and others.

  We develop a unifying framework for manipulation models with few types of people, one of the most commonly studied scenarios.
  A critical insight of our framework is to separate the descriptive complexity of the voting rule $\mathcal R$ from the number of types of people.
  This allows us to finally settle the computational complexity of $\mathcal R$-{\sc Swap Bribery}, one of the most fundamental manipulation problems.
  In particular, we prove that $\mathcal R$-\textsc{Swap Bribery} is fixed-parameter tractable when $\mathcal R$ is Dodgson's rule and Young's rule, when parameterized by the number of candidates.
  This way, we resolve a long-standing open question from 2007 which was explicitly asked by Faliszewski et al.~[JAIR 40, 2011].

  Our algorithms reveal that the true hardness of bribery problems often stems from the complexity of the voting rules.
  On one hand, we give a fixed-parameter algorithm parameterized by number of types of people for complex voting rules.
  Thus, we reveal that $\mathcal R$-\textsc{Swap Bribery} with Dodgson's rule is much harder than with Condorcet's rule, which can be expressed by a conjunction of linear inequalities, while Dodson's rule requires quantifier alternation and a bounded number of disjunctions of linear systems.
  On the other hand, we give an algorithm for quantifier-free voting rules which is parameterized only by the number of conjunctions of the voting rule and runs in time polynomial in the number of types of people.
  This way, our framework explains why {\sc Shift Bribery} is polynomial-time solvable for the plurality voting rule, making explicit that the rule is simple in that it can be expressed with a single linear inequality, and that the number of voter types is polynomial.
\end{abstract}

\sloppy

\section{Introduction}
Problems of manipulation, bribery and control constitute a fundamental part of computational social choice.
Many such problems are known to be \NP-hard (or worse).
However, their input can naturally be partitioned into several parts, like the number of voters, the number of candidates, and others.
This motivates the study of such problems by the powerful tools of parameterized complexity.
One of the most fundamental parameters is the number of candidates $|C|$, which in many real-life scenarios can be expected to be reasonably small.

A by-now classical example in this direction is the $\mathcal R$-{\sc Swap Bribery} problem, which takes as input an election consisting of a~set $C$ of candidates and a~set $V$ of voters with their individual preference lists $\pref_v$ (for $v\in V$), which are total orders over~$C$.
Additionally, for each voter $v\in V$ and each pair of consecutive candidates $c \pref_v c'$, there is some cost $\sigma^v(c,c')\in\mathbb Z$ of swapping the order of $c$ and $c'$ in $\pref_v$.
The objective is to find a minimum-cost set of swaps of consecutive candidates in the preference lists in order to make a~designated candidate $c^\star\in C$ the winner of the thus-perturbed election under a fixed voting rule~$\mathcal R$.

This problem was introduced by Elkind et al.~\cite{ElkindEtAl2009} and has since been studied for many classical voting rules $\mathcal R$~\cite{DornSchlotter2012,FaliszewskiEtAl2014,KnopEtAl2017,SchlotterEtAl2017}.
In particular, its computational complexity has been thoroughly analyzed with respect to the number of candidates $|C|$.
The observation that $|C|$ is often small motivated the search for \emph{fixed-parameter algorithms} for $\mathcal R$-{\sc Swap Bribery} parameterized by $|C|$, which are algorithms that run in time $f(|C|)\cdot n^{O(1)}$ for some computable function $f$, here $n$ denotes the size of the input election; if such an algorithm exists, we then say that the problem is \emph{\fpt} with respect to the parameter $|C|$.

Despite the problem's importance, for a long time, only the ``uniform cost'' case of $\mathcal R$-{\sc Swap Bribery} was known to be \fpt for various voting rules, parameterized by the number of candidates; here, \emph{uniform cost} refers to the special case that all voters have the same cost function, that is, $\sigma^v \equiv \sigma$ for all $v\in V$.
This is a~fundamental result due to Dorn and Schlotter~\cite{DornSchlotter2012}, who showed that $\mathcal R$-{\sc Swap Bribery} with uniform cost can be solved in time $2^{2^{|C|^{\Oh(1)}}}\cdot n^{O(1)}$ for all voting rules $\mathcal R$ that are ``linearly describable''.
Many classical voting rules are indeed linearly describable, like any scoring protocol, Copeland$^\alpha$, Maximin, or Bucklin.

Recently, Knop et al.~\cite{KnopEtAl2017} gave the first fixed-parameter algorithms for $\mathcal R$-{\sc Swap Bribery} for \emph{general cost functions} for most voting rules $\mathcal{R}$ studied in the literature (scoring protocol, Copeland$^{\alpha}$, Maximin, Bucklin etc.), thereby removing the uniform cost assumption.
This way, they resolved a long-standing open problem.
Moreover, their algorithm runs in time $2^{|C|^{O(1)}}\cdot n^{O(1)}$ for many rules $\mathcal R$, and thus improves the double-exponential run time by Dorn and Schlotter.
Their key idea was to reduce the problem to so-called \emph{$n$\hy{}fold integer programming}, which allowed them to solve the problem efficiently for bounded number of candidates despite their integer program having an unbounded number of variables.
Their approach also solved $\mathcal R$-{\sc Swap Bribery} for~$\mathcal R$ being the Kemeny rule, even for general cost functions, though the Kemeny rule is not known to be linearly describable (cf.~\cite[p. 338]{FaliszewskiEtAl2011}).
However, this does not apply for Dodgson's and Young's rules.

\subsection{The challenge}
Even so, there are some notable voting rules $\mathcal R$ for which the complexity of $\RR$-{\sc Swap Bribery} remained open even in the uniform cost case.
This includes the Dodgson rule and the Young rule.
Those rules are based on the notion of \emph{Condorcet winner}, which is a candidate who beats any other candidate in a head-to-head contest.
The Condorcet voting rule is very natural and dates back to the 18\textsuperscript{th} century; however, clearly there exist elections without a Condorcet winner.
In such a situation one proclaims those candidates as winners who are ``closest'' to being a Condorcet winner; different notions of closeness then yield different voting rules:
\begin{itemize}[leftmargin=*]
  \item Closeness measured as the of number of swaps in voter's preference orders defines the \emph{Dodgson rule}.
  \item Closeness measured as the number of voter deletions defines the \emph{Young rule}.
\end{itemize}
Thus, a candidate $c$ is a Dodgson winner if s/he can be made a Condorcet winner by a minimum number of swaps in the voter's preference orders over all candidates; analogously for the Young rule and voter deletions.

Kemeny rule\footnote{Faliszewski~\cite{FaliszewskiEtAl2009} give a fixed-parameter algorithm for the $\mathcal R$-{\sc Bribery} problem when $\mathcal R = $Kemeny; $\mathcal R$-{\sc Bribery} is, in a sense, simpler than $\mathcal{R}$-\textsc{Swap Bribery} because in \textsc{Bribery} the cost of bribing a voter does not depend on \emph{how} we bribe, while in \textsc{Swap Bribery} the cost is the sum of costs for each performed swap.},

When considering $\mathcal{R}$-\textsc{Swap Bribery}, the Dodgson rule and the Young rule are much more complicated to handle than other rules; the reasons are several.
First, for many voting rules $\RR$, the winner of an election can be found in polynomial time, and solving this winner determination problem is certainly a necessary subtask when solving $\RR$\hy{}\textsc{Swap Bribery}.
However, for \mbox{$\RR \in \{\mbox{Dodgson, Young, Kemeny}\}$}, already winner determination is \NP{}\hy hard, and so even \emph{verifying} a solution (that $c^\star$ is indeed the winner of the perturbed election) is intractable.
However, for \mbox{$\RR \in \{\mbox{Dodgson, Young}\}$}, already winner determination is \NP{}\hy hard, and so even \emph{verifying} that~$c^\star$ is indeed the winner of the perturbed election is intractable.
In fact, winner determination for these voting rules is complete for parallel access to \NP~\cite{HemaspaandraEtAl2005,HemaspaandraHJ99,RotheSV03}, denoted $\mathsf{P}^{\NP}_{||}$\hy{}complete\footnote{The class $\mathsf{P}^{\NP}_{||}$ contains all problems solvable in polynomial time by a~deterministic Turing machine which has access to an \NP oracle, but must ask all of its oracle queries at once (i.e., the queries can not depend on each other).}.
Second, for more than 25 years the winner determination problem for the Dodgson rule and Young rule was only known to be solvable by an ILP-based algorithm~\cite{BartholdiEtAl1989} with doubly-exponential dependence in $|C|$; a~single-exponential algorithm is only known since recently~\cite{KnopEtAl2017}.
Even though winner determination for these rules turns out to be \fpt with parameter $|C|$, there is a~sharp difference: for the Kemeny rule, a~simple procedure enumerating all $|C|!$ possible preference orders suffices to determine the winner, while for Dodgson and Young, only a~double-exponential ILP-based algorithm was known for a~long time~\cite{BartholdiEtAl1989} and a~single-exponential algorithm is only known recently~\cite{KnopEtAl2017}.
This provides a sharp contrast to the Kemeny rule, for which simply enumerating all $|C|!$ possible preference orders suffices to determine the winner.
Faliszewski et al.~\cite{FaliszewskiEtAl2009} describe these difficulties:
\newenvironment{myquote}[1]%
{\list{}{\leftmargin=#1\rightmargin=#1}\item[]}%
{\endlist}
\begin{myquote}{0.1in}
  \emph{It is interesting to consider which features of Kemeny elections allow us to employ the above [ILP-based] attack, given that the same approach does not seem to work for either Dodgson or Young elections.
  One of the reasons is that the universal quantification implicit in Dodgson and Young elections is over an exponentially large search space, but the quantification in Kemeny is, in the case of a fixed candidate set, over a fixed number of options.}
\end{myquote}
Thus, it is not clear how to solve $\mathcal R$-{\sc Swap Bribery} even for uniform cost with \emph{any} fixed-parameter algorithm for $\mathcal R$ being the Dodgson rule or the Young rule.
These complications led Faliszewski et al.~\cite{FaliszewskiEtAl2011} to explicitly ask for the complexity of $\RR$-{\sc Swap Bribery} parameterized by the number of candidates under these rules.

\subsection{Our contributions}
\label{sec:ourcontributions}
We start by making a key observation about the majority of fixed-parameter algorithms for $\RR$-{\sc Swap Bribery} when $|C|$ is small.
A typical such result is an algorithm for $\mathcal R$-\textsc{Swap Bribery} for $\mathcal R$ being Con\-dorcet's voting rule.
That algorithm uses two key ingredients:
\begin{enumerate}[leftmargin=*]
  \item There are at most $|C|!$ preference orders of $C$, and hence each voter falls into one of $|C|!$ types; thus, an input election is expressible as a society $\ves = (s_1, \dots, s_{|C|!})$, where $s_i$ is the number of voters of type $i$.
  \item Expressing that a candidate $c^\star$ is a Condorcet winner is possible using a conjunction of $|C|-1$ linear inequalities in terms of $\ves$.
\end{enumerate}
As those key properties hold almost universally for voting rules~$\mathcal R$, one might be tempted to think that if there are many types of voters, the $\RR$-{\sc Swap Bribery} problem must be hard, and if there are few types of voters, the problem must be easy.
However, two points arise as counter-evidence.
First, very recently, Knop et al.~\cite{KnopEtAl2017} showed that even if there are many types of voters who differ by their cost functions, the $\RR$-\textsc{Swap Bribery} problem remains fixed-parameter tractable for a wide variety of voting rules $\RR$.
Second, as already mentioned, it was open since 2007 whether $\mathcal R$-\textsc{Swap Bribery} with Dodgson's and Young's voting rule are fixed-parameter tractable for few candidates, even for uniform cost functions.

\heading{From voters and candidates to societies}
Here, we take a novel perspective.
We observe that the two key ingredients (1) and (2) apply much more widely than for $\RR$-\textsc{Swap Bribery}; namely, they are also present in many other manipulation, bribery and control problems.
We therefore abstract away the specifics of such problems and introduce general notions of ``society'', ``moves in societies'', and ``winning conditions''.
Let $\tau \in \N$ be the number of types of people (e.g., voters in an election or a referendum).
A \emph{society} $\ves$ is simply a non-negative $\tau$-dimensional integer vector encoding the numbers of people of each type.
A \emph{move} $\vem$ is a $\tau^2$-dimensional integer vector whose elements sum up to zero; it encodes how many people move from one type to another.
A \emph{change} $\veDelta$ is a $\tau$-dimensional vector (typically associated with a move) encoding the effect of a move on a society, such that $\ves + \veDelta$ is again a society.
Finally, a \emph{winning condition} $\Psi(\ves)$ is a predicate encoding some desirable property of a society, such as that a preferred candidate has won or that a preferred agenda was selected in a referendum.
Specifically, we study winning conditions which are describable by formulas in \emph{Presburger Arithmetic} (PA).
PA is a logical language whose atomic formulas are linear inequalities over the integers, which are then joined with logical connectives and quantifiers.
Thus, winning conditions describable by PA formulas widely generalize the class of linearly describable voting rules by Dorn and Schlotter~\cite{DornSchlotter2012}.

Our main technical contribution informally reads as follows:
\begin{theorem}[informal]
\label{thm:main_thm}
Deciding satisfiability of PA formulas with two quantifiers is fixed-parameter tractable with respect to the dimension and length of formula, provided its coefficients and constants are given in unary.
\end{theorem}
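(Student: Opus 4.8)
The plan is to eliminate the two quantifier blocks one at a time, using that in a fixed number of dimensions a single elimination step is effective with a blow-up controlled by the parameter rather than by the whole input; the unary encoding of coefficients is exactly what keeps this blow-up parameter-bounded. Write the input as $\Phi \equiv \exists\vex\in\Z^n\,\forall\vey\in\Z^m\,\phi(\vex,\vey)$ with $\phi$ quantifier-free; a formula of the more general \efeILP shape $\exists\vex\,\forall\vey\,\exists\vez\,(A\vex+B\vey+C\vez\le\veb)$ is first brought to this form by applying the projection step below to the innermost block, whose parameters are $(\vex,\vey)$. Here $\phi$ is a Boolean combination of atoms $\vea\tran\vew\le b$ and $d\mid \vea\tran\vew+b$; because everything is in unary, every coefficient, constant and modulus is bounded by the parameter $L$, the length of $\Phi$.

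First I would handle the inner universal quantifier by negation: $\forall\vey\,\phi\equiv\neg\,\exists\vey\,\neg\phi$. Put $\neg\phi$ in disjunctive normal form $\bigvee_{j=1}^k\gamma_j(\vex,\vey)$, where each $\gamma_j$ is a conjunction of at most $L$ inequality/divisibility atoms and $k\le 2^{O(L)}$; each $\gamma_j$ describes a parametric family of systems $S_j(\vex)=\{\vey\in\Z^m:\ M_j\vey\le\veu_j+N_j\vex,\ \text{congruences}\}$ whose data are bounded by $L$. For a given $\vex$ we have $\forall\vey\,\phi(\vex,\vey)$ iff $S_j(\vex)\cap\Z^m=\emptyset$ for every $j$.

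Next I would invoke parametric integer programming in fixed dimension (Kannan; Eisenbrand--Shmonin; Woods): for each $j$ the feasibility set $T_j:=\{\vex\in\Z^n: S_j(\vex)\cap\Z^m\neq\emptyset\}$ is definable by a quantifier-free Presburger formula $\theta_j(\vex)$ whose number of atoms and whose coefficients and moduli are bounded by a computable function $g(n+m,L)$, and $\theta_j$ is computable in time $g(n+m,L)\cdot\poly(\|\Phi\|)$. (The projection multiplies coefficients by at most an exponential in $m$ and in the number of atoms, hence by a function of the parameters alone; in binary this could instead be doubly exponential in the input.) Negating and reassembling,
\[
  \Phi \;\equiv\; \exists\vex\in\Z^n\ \bigwedge_{j=1}^{k}\neg\theta_j(\vex),
\]
a purely existential Presburger formula whose matrix has size and coefficients bounded by a computable $h(n+m,L)$. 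Finally, put this matrix in DNF; each disjunct describes the intersection of a rational polyhedron with a lattice coset in $\Z^n$, of encoding length $O(h(n+m,L))$, whose integer feasibility is tested by Lenstra's algorithm in time $n^{O(n)}\cdot\poly(h(n+m,L))$. Altogether the running time is $f(n+m,L)\cdot\poly(\|\Phi\|)$, which is \FPT.

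The main obstacle is the third step: making the elimination of $\exists\vey$ over $\Z^m$ effective with a blow-up bounded purely by the parameters. One must either cite a sufficiently quantitative version of parametric integer programming in fixed dimension — carefully tracking that the bit-length of all produced coefficients and moduli stays bounded in terms of $m$ and the (unary!) input coefficients — or reprove what is needed via flatness: if $S_j(\vex)$ has no integer point it is flat along some lattice direction, the relevant directions, widths and resulting sub-slabs are bounded in terms of $m$ and $L$, and ``along every such direction the slab is empty or too thin'' is expressible by the quantifier-free formula $\neg\theta_j$, after which one recurses on the lower-dimensional slabs. Carrying the divisibility atoms through the whole argument without clearing them (which would multiply moduli and break the bounds) is a secondary nuisance.
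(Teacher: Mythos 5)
Your overall skeleton (negate the inner block, normalize, project out the inner variables, then solve an existential formula with Lenstra) is coherent, but the step you yourself flag as ``the main obstacle'' is a genuine gap, and it is exactly the gap the paper deliberately routes around. Your step~3 requires, for each disjunct $\gamma_j$, a quantifier-free formula $\theta_j(\vex)$ describing the integer projection $T_j=\{\vex: S_j(\vex)\cap\Z^m\neq\emptyset\}$ whose \emph{number of atoms} is bounded by a computable function of the dimension and the number of atoms of the input alone; otherwise the subsequent DNF conversion (exponential in the number of atoms of $\bigwedge_j\neg\theta_j$) and the Lenstra calls are not fixed-parameter bounded. Note also that a bound of the form $g(n+m,L)$ where $L$ absorbs the unary coefficient magnitudes is not good enough: in the paper's applications the right-hand sides $\beta$ are of order $|C|^2|V|$ and must stay \emph{outside} the parameter --- the formal statement is a running time of $g(\ven,\delta,\gamma)\cdot\poly(\alpha,\beta)$. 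The classical route to such a parametric quantifier elimination with parameter-bounded blow-up is Kannan's Partitioning Theorem, which was disproved by Nguyen and Pak; Woods's algorithm for $\forall\exists$ sentences is only known to run in polynomial time for \emph{constant} dimension, with unclear dependence on the dimension as a parameter. The paper's Discussion section states explicitly that this is why it avoids quantifier elimination altogether. Your fallback ``reprove via flatness'' sketch is essentially Kannan's partitioning argument and inherits the same unresolved difficulty: bounding the number of directions, widths and recursive sub-slabs purely in terms of the parameters.

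The paper's proof never eliminates a quantifier. It negates the whole sentence to obtain $\forall\vex_1\exists\vex_2:\neg\Phi$, converts $\neg\Phi$ into a DNF of parameter-bounded size (Lemma~\ref{lem:woods_dnf}), collapses the $\delta$ disjuncts into a single conjunctive system by a big-$M$ trick with $\delta$ fresh \emph{existentially quantified} indicator variables (Lemma~\ref{lem:disjunctions}), and then \emph{decides} the resulting $\forall\exists$ sentence directly as a parametric ILP via Eisenbrand--Shmonin (Corollary~\ref{cor:es_pilp}), which is fixed-parameter tractable in the dimensions and number of rows when the entries are unary. Deciding a $\forall\exists$ sentence is a strictly weaker task than computing a quantifier-free description of an integer projection, and that weakening is precisely what makes the argument go through with known tools. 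A secondary point: the paper's PA fragment contains only linear inequalities, so the divisibility atoms and the modular bookkeeping you worry about never arise in its argument; they enter only because your route performs genuine Presburger quantifier elimination.
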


The importance of Theorem~\ref{thm:main_thm} arises from its applicability to the following general manipulation problem that we introduce here.
This general manipulation problem, which we call {\sc Minimum Move}, captures that many manipulation problems can be cast as finding a minimum move with respect to some objective function; in particular, it encompasses the well-studied $\RR$-{\sc Swap Bribery} problem.
We study {\sc Minimum Move} for linear objective functions and winning conditions~$\Psi$ expressible with PA formulas of the form~``$\exists\forall$''.
For all such~$\Psi$, with the help of Theorem~\ref{thm:main_thm}, we show that {\sc Minimum Move} is fixed-parameter tractable for combined parameter the descriptive complexity (length) of the winning condition~$\Psi$ and the number $\tau$ of ``types of people'', that is, it is fixed-parameter tractable for parameter lengthe of $\Psi$ plus $\tau$.
As an important special case, we obtain the first fixed-parameter algorithm for $\mathcal R$-{\sc Swap Bribery} for~$\RR$ the Dodgson rule and the Young Rule with uniform costs.
To this end, we model the winning condition of the Dodgson rule and Young rules as a PA formula.
For intuition, consider the Young rule: a~candidate $c^{\star}$ is a Young winner (with score $d$) if there \emph{exists} a set $V^\star \subseteq V$ of at most $d$ voters such that $c^{\star}$ is a~Condorcet winner of the election $(C, V \setminus V^\star)$, and \emph{for all} sets $V' \subseteq V$ of at most $d-1$ voters any other candidate $c \neq c^{\star}$ is not a~Condorcet winner of the election $(C, V \setminus V')$.
This formula has one quantifier alternation, and its length (for a fixed score $d$) is bounded by some function of $|C|$; finally we have to take a disjunction of such formulas over all possible scores $d$.
For a candidate set $C$, the number $\tau$ of types of people is bounded by $|C|!$.
Consequently, we finally settle the long-standing open question about the complexity of $\RR$-\textsc{Swap Bribery} for $\RR \in \{\textrm{Dodgson, Young}\}$, that was explicitly raised by Faliszewski~\cite{FaliszewskiEtAl2011}:
\begin{theorem}
\label{thm:main-fpt-swapbribery}
  $\RR$-{\sc Swap Bribery} with uniform cost is \fpt parameterized by the number of candidates for~$\RR$ being the Dodgson rule or the Young rule; it can be solved in time $f(|C|) \cdot |V|^{O(1)}$ for some computable function $f$.
\end{theorem}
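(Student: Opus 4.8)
The plan is to cast $\RR$-{\sc Swap Bribery} with uniform cost as an instance of the {\sc Minimum Move} problem whose objective is linear and whose winning condition $\Psi$ is a Presburger formula of the shape ``$\exists\forall$'', and then to invoke the fixed-parameter algorithm for {\sc Minimum Move} parameterized by the length of $\Psi$ plus the number $\tau$ of types, which in turn rests on Theorem~\ref{thm:main_thm}. Two ingredients have to be supplied: (a) a reduction showing that a uniform-cost swap bribery is exactly a \emph{move} in a society of at most $|C|!$ types whose cost is a \emph{linear} function of the move; and (b) a Presburger encoding of ``$c^\star$ is a Dodgson (resp.\ Young) winner'' that uses a single quantifier alternation, has length and dimension bounded by a function of $|C|$, and has coefficients small enough that the unary-encoding hypothesis of Theorem~\ref{thm:main_thm} is met within the target running time $f(|C|)\cdot|V|^{O(1)}$.

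For ingredient (a) I would identify the type of a voter with its preference order, so that $\tau\le|C|!$ and an election becomes a society $\ves$. Since the cost function $\sigma$ is uniform across voters, for each ordered pair of types $(i,j)$ the minimum cost of transforming preference order $i$ into order $j$ by swaps of consecutive candidates is a fixed number $d_{ij}$ — a shortest-path distance in the weighted permutahedron graph on the $|C|!$ orders — computable in time depending only on $|C|$ (times the size of the numbers in the input). A bribery is then a move $\vem=(m_{ij})$, its total cost is $\sum_{i,j}d_{ij}m_{ij}$, and the induced change $\veDelta$ yields the perturbed society $\ves'=\ves+\veDelta$; conversely every swap bribery induces such a move of no smaller cost. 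This is precisely a {\sc Minimum Move} instance with a linear objective and $\tau\le|C|!$ types.

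For ingredient (b), take the Young rule first. I would write ``$c^\star$ is a Young winner of $\ves'$'' as: \emph{there exist} an integer $d\ge 0$ and a subsociety $\vet$ with $\vezero\le\vet\le\ves'$ and $\sum_i t_i\le d$ such that $c^\star$ is a Condorcet winner of $\ves'-\vet$; \emph{and for all} candidates $c\ne c^\star$ and all subsocieties $\veu$ with $\vezero\le\veu\le\ves'$ and $\sum_i u_i\le d-1$, candidate $c$ is \emph{not} a Condorcet winner of $\ves'-\veu$. Being a Condorcet winner of a society is a conjunction of $|C|-1$ linear inequalities whose coefficients lie in $\{-1,0,1\}$ (each type contributes $\pm1$ to each head-to-head margin), so its negation is the matching disjunction; hence the displayed statement is an $\exists\,(d,\vet)\,\forall\,\veu$ formula with one alternation, dimension $O(|C|!)$, length bounded by a function of $|C|$, all coefficients in $\{-1,0,1\}$, and the only large constant being $d\le|V|$, which appears solely in ``$\le$'' bounds — harmless in unary. (Equivalently one disjuncts over $d\in\{0,\dots,|V|\}$, at the cost of a factor $|V|$.) For the Dodgson rule the same template works with voter deletions replaced by swaps: ``$c^\star$ can be made a Condorcet winner by at most $d$ swaps'' is, following Bartholdi et al.~\cite{BartholdiEtAl1989}, an existential statement over a bounded number of auxiliary integers (how many voters of each type are pushed up, and by how much) with linear constraints, so it fits inside the $\exists$ block, while the universal block forbids any other candidate from attaining Dodgson score at most $d-1$.

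The main obstacle is exactly the difficulty singled out by Faliszewski et al.: the universal quantifier in the Dodgson/Young winning condition a priori ranges over an exponentially large search space — all small sets of voters to delete, or all small sets of swaps. The society encoding is what defuses this: those exponentially many choices collapse into a single integer vector $\veu$ of bounded dimension ($\le|C|!$ coordinates, one per type), so the ``$\forall$'' block becomes a quantifier over a space whose dimension is bounded in $|C|$, which is precisely what lets Theorem~\ref{thm:main_thm} apply. The delicate part of the construction is to carry this out while simultaneously (i) keeping a single quantifier alternation, (ii) keeping the formula length a function of $|C|$, and (iii) keeping every coefficient in $\{-1,0,1\}$ so that the unary-encoding requirement of Theorem~\ref{thm:main_thm} costs nothing; once that is done, feeding the resulting instance into the fixed-parameter algorithm for {\sc Minimum Move} yields the claimed $f(|C|)\cdot|V|^{O(1)}$ bound.
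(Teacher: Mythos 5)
Your proposal is correct and follows essentially the same route as the paper: encode the election as a society over at most $|C|!$ types, turn the uniform-cost bribery into a linear-cost move via pairwise swap distances, express ``$c^\star$ has Dodgson/Young score at most $d$ and every other candidate has score at least $d$'' as an $\exists\forall$ Presburger formula whose quantifiers range over bounded-dimension move/deletion vectors (which is exactly how the exponential search space noted by Faliszewski et al.\ collapses), and feed the result to the {\sc Minimum Move} algorithm built on Theorem~\ref{thm:main_thm}. The only (immaterial) differences are that the paper disjuncts over the $D\le|C|^2|V|$ relevant scores $d$ rather than quantifying $d$ existentially, and models Young deletions via a latent voter type with a $\{1,+\infty\}$-valued move costs vector rather than via subsociety vectors.
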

Beyond this fundamental problem, we show that a host of other well-studied manipulation problems are captured by our fixed-parameter algorithm for {\sc Minimum Move}:
\begin{corollary}
\label{cor:main-fpt-other}
  For $\RR \in \{\textrm{Dodgson, Young}\}$, the following problems are fixed-parameter tractable for uniform costs when parameterized by the number $|C|$ of candidates: $\RR$-\textsc{\$Bribery}, $\RR$-\textsc{CCDV/CCAV}, $\RR$-\textsc{Possible Winner}, and $\RR$-\textsc{Extension Bribery}.
\end{corollary}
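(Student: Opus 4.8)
The plan is to cast each of the four problems as a single \textsc{Minimum Move} instance (or, exactly as in the proof of Theorem~\ref{thm:main-fpt-swapbribery}, as polynomially many such instances indexed by a target $\RR$-score $d$) and then invoke our fixed-parameter algorithm for \textsc{Minimum Move} (which in turn rests on Theorem~\ref{thm:main_thm}). In all four cases the society will range over ``types of people'', each determined by a (possibly partial, possibly truncated) preference order on $C$ together with at most one bit of provenance, so the number $\tau$ of types is bounded by $2^{O(|C|\log|C|)}$; the winning condition will in every case be the \emph{identical} $\exists\forall$ Presburger formula $\Psi_\RR(\cdot)$ expressing ``the designated candidate $c^\star$ is an $\RR$-winner'' that was already built for Swap Bribery, of length bounded by a function of $|C|$; and the objective will be linear. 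Granting these three facts, the corollary is immediate: the combined parameter $\tau+|\Psi_\RR|$ is bounded by a function of $|C|$, and the \textsc{Minimum Move} algorithm then runs in time $f(|C|)\cdot|V|^{O(1)}$.

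It thus remains, for each problem, to describe the move space and the objective and to certify the three facts above. For $\RR$-\textsc{\$Bribery} with uniform cost, a type is just a preference order ($\tau\le|C|!$), a move redistributes voters among types, and since every voter has the same price $p$ the cost of a move equals $p$ times the number of voters it displaces, which is linear. For $\RR$-\textsc{CCDV}, I would add one ``deleted'' sink type; deleting a voter is a move into the sink, the objective counts the mass routed there, and the bound $k$ on deletions is one extra linear inequality. For $\RR$-\textsc{CCAV}, symmetrically add, for each preference order, one ``unregistered'' source type; the pool of available unregistered voters becomes an upper bound (one linear inequality per source type) on the mass that may be routed out of it into the election, and the objective, namely the number of added voters (bounded by $k$), is linear. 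For $\RR$-\textsc{Possible Winner} the society starts out indexed by \emph{partial} orders; since a partial order on $C$ has at most $|C|!$ linear extensions, each voter of a partial type is routed by the move to one of its boundedly many compatible total-order types, and the problem is pure feasibility (the objective is identically zero). Finally, for $\RR$-\textsc{Extension Bribery} a type is a truncated ballot; a voter may be moved only to a completion of its truncation, and the uniform cost of such an extension depends only on the source type, so the total cost is again a linear function of the move, with the budget realized as one linear inequality.

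The main obstacle is not a new idea but the uniform bookkeeping: in each reduction the extra restrictions on the move space---the deletion budget, the pool sizes, the compatibility between a partial (or truncated) ballot and its completions, and the extension budget---must be shown to be \emph{quantifier-free} systems of linear inequalities in the society and move variables, so that they can be conjoined to the outermost existential block of $\Psi_\RR$ without spoiling its $\exists\forall$ shape or inflating its length beyond a function of $|C|$; and one must check that the disjunction over target scores is handled exactly as for Swap Bribery, i.e.\ by iterating over the polynomially many candidate values of $d$ (at most $O(|C|^2|V|)$ for Dodgson and $O(|V|)$ for Young). Once these routine verifications are in place, each problem is an instance of \textsc{Minimum Move} with linear objective and $\exists\forall$ winning condition, to which our fixed-parameter algorithm applies verbatim, yielding the claimed running time $f(|C|)\cdot|V|^{O(1)}$.
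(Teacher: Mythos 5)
Your proposal is correct and follows essentially the same route as the paper: each problem is expressed as \textsc{Minimum Move} over a society with $\tau$ bounded by a function of $|C|$, a linear objective, and the same disjunction over target scores $d$ of $\exists\forall$ formulas $\Psi_{\RR,d}$ built for Swap Bribery, after which Corollary~\ref{cor:main_cor} applies. The only difference is packaging: the paper obtains all four problems at once as special cases of $\RR$-\textsc{Multi Bribery}, using voter types of the form (preference order, approval count, active/latent bit) with $\tau \le 2|C|\cdot|C|!$ and move costs computed by shortest paths, whereas you carry out the four reductions individually (your bound $\tau\le 2^{O(|C|\log|C|)}$ is slightly off for the partial-order types in \textsc{Possible Winner}, where $2^{O(|C|^2)}$ is the honest count, but this is still a function of $|C|$ and does not affect the conclusion).
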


Let us turn our attention to the parameter ``number of types of people'' $\tau$.
Our main contribution here is the following:
\begin{theorem}[informal]
\label{thm:polymanytypes-informal}
For any quantifier-free winning condition $\Psi$, {\sc Minimum Move} can be solved in time \emph{polynomial} in the number of types and exponential only in the number of linear inequalities of\/ $\Psi$.
\end{theorem}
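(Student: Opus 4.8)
The plan is to reduce \textsc{Minimum Move} with a quantifier\hy{}free winning condition to a bounded number of integer programs, each of which is a transportation program (a totally unimodular system) perturbed by only a handful of extra rows, and then to solve those programs by the machinery for integer programming with few complicating constraints.

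First I would normalise $\Psi$. Let $A_1,\dots,A_p$ be the atomic linear inequalities occurring in $\Psi$, so that $\Psi$ is a Boolean combination of the $A_h$. For every truth assignment $\vealpha\in\{0,1\}^p$ that makes $\Psi$ evaluate to true, form the polyhedron $P_\vealpha$ obtained by imposing $A_h$ when $\alpha_h=1$ and the (integer) negation of $A_h$ when $\alpha_h=0$; since negating $\vea\cdot\ves\le b$ over $\Z$ just yields $\vea\cdot\ves\ge b+1$, each $P_\vealpha$ is described by at most $p$ linear inequalities, and $\Psi(\ves)$ holds if and only if $\ves$ lies in one of the at most $2^p$ polyhedra $P_\vealpha$. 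Hence it suffices to solve, for each fixed $P_\vealpha$, the restricted problem ``find a cheapest move turning the input society $\ves_0$ into a society lying in $P_\vealpha$'', and return the best value over all $\vealpha$; this loop contributes the claimed $2^p$ factor.

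Next I would model one restricted problem as an integer program. Introduce a variable $m_{ij}\ge 0$ for the number of people moved from type $i$ to type $j$ (with $m_{ii}$ meaning ``stays''); the conservation constraints $\sum_{j} m_{ij}=s_{0,i}$ for all $i$ simultaneously force the move to be realizable and make the resulting society $s'_j=\sum_i m_{ij}$ automatically a non\hy{}negative integer society of the same total size. The objective is the linear form $\sum_{ij}\sigma_{ij}m_{ij}$, and membership of $\ves'$ in $P_\vealpha$ is the linear system $\sum_{ij} a_{hj}\,m_{ij}\le b_h$, $h\in\{1,\dots,p\}$ (after the appropriate sign flips). The key structural observation is that, if one deletes the at most $p$ rows coming from $P_\vealpha$, the remaining matrix is the node\hy{}arc incidence matrix of a bipartite graph, hence totally unimodular; equivalently, grouping the out\hy{}variables $(m_{i1},\dots,m_{i\tau})$ of each source type $i$ into a brick with the single local equality $\sum_j m_{ij}=s_{0,i}$, the program is an $n$-fold integer program with $\tau$ bricks, brick size $\tau$, one local row per brick, and only $p$ globally linking rows. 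Either viewpoint suffices: by the theory of $n$-fold integer programming — equivalently, by the classical results on integer minimum\hy{}cost flow with a fixed number of side constraints over a totally unimodular base — such a program is solvable in time $h(p,\Delta)\cdot\poly(\tau)\cdot\poly(N)$, where $N$ is the input size and $\Delta$ bounds the coefficients of $\Psi$; since the coefficients are given in unary (and are in fact bounded for the voting rules of interest), this is polynomial in $\tau$ and in $N$. Multiplying by the $2^p$ outer loop gives an overall bound polynomial in the number of types and exponential only in $p$, as claimed.

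The step I expect to be the main obstacle is precisely this per\hy{}polyhedron solve: the natural integer program has $\tau^2$ variables, so Lenstra\hy{}type algorithms parameterized by the dimension are useless here, and one genuinely has to exploit that the \emph{only} obstruction to total unimodularity is the bounded number $p$ of winning\hy{}condition rows — this is what turns a running time exponential in $\tau$ into one polynomial in $\tau$. The remaining ingredients (correctness of the conservation model, the handling of strict versus non\hy{}strict and of negated atoms, and the bookkeeping of minimising over the $2^p$ polyhedra) are routine.
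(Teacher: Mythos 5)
Your reduction to $2^p$ sign patterns and the transportation-style model with conservation constraints $\sum_j m_{ij}=s_{0,i}$ match the paper's setup in spirit (the paper works per DNF disjunct rather than per sign pattern, which is immaterial). The genuine gap is exactly at the step you yourself flag as ``the main obstacle'': you assert that an $n$-fold IP with $\tau$ bricks of size $\tau$, one local row per brick and $p$ linking rows is solvable in time $h(p,\Delta)\cdot\poly(\tau)$, appealing to ``the theory of $n$-fold integer programming'' or, equivalently, to ``classical results on integer minimum-cost flow with a fixed number of side constraints''. Neither reference carries that weight. The $n$-fold algorithms available to the paper (Hemmecke--Onn--Romanchuk and its refinements) run in time \emph{exponential in the brick size} $t$, so with $t=\tau$ you get precisely the $\exp(\tau)$ dependence you are trying to avoid; and integer min-cost flow with extra side constraints is not a classical polynomial-time result one can simply invoke --- it is essentially the statement to be proved. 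Total unimodularity of the underlying bipartite incidence matrix does not survive the addition of the $p$ winning-condition rows, so it buys nothing on its own.

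The paper closes this gap with a concrete aggregation argument that your proposal is missing. The winning-condition matrix $A\in\Z^{\gamma\times\tau}$ has entries bounded by $\alpha$, hence at most $(2\alpha+1)^{\gamma}=\alpha^{O(\gamma)}$ \emph{distinct columns}; two destination types whose columns coincide are interchangeable as far as $\Psi$ is concerned and differ only in cost. Grouping destinations into these $C\le\alpha^{O(\gamma)}$ equivalence classes, and charging each source type $i$ the reduced cost $c^i_{i,j}=\min_{j'\in\CC[j]}c_{i,j'}$ of reaching the cheapest representative of class $j$, yields an IP with $\tau$ bricks of size $C$ (bounded by a function of $\gamma$ and $\alpha$ alone), one simplex row per brick, and $\gamma$ linking rows --- a combinatorial pre-$n$-fold IP in the sense of Knop et al., solvable in time $g(\gamma,\alpha)\cdot(\tau+\delta)^{O(1)}\log(\ves)$. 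Without this column-collapsing step (or an explicit substitute, such as the later $n$-fold bounds with polynomial dependence on brick size, which you do not supply), the per-polyhedron solve --- and hence the theorem --- is not established.
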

Note that in many models of bribery and control, the number of \emph{potential} types of people (i.e., types that can occur in any feasible solution) is polynomial in the number of people on input.
For example, in \textsc{Shift Bribery}, every voter can be bribed to change their preferences order to one of $|C|-1$ orders; thus the number of potential types is $(|C|-1)|V|$.
Similarly, in \textsc{CCAV / CCDV} (constructive control by adding or deleting voters), every voter has an active/latent bit; thus the number of potential types is $2|V|$.
Similar arguments also work for \textsc{Support Bribery} where we change voters' approval counts, and with a more intricate argumentation also for some voting rules and \textsc{Bribery} and \textsc{\$Bribery}.
In this sense, the fact that we need to consider $|C|!$ potential voter types in $\RR$-\textsc{Swap Bribery} almost seems like an anomaly, rather than a rule.
In summary, the complexity of {\sc Minimum Move} depends primarily on the descriptive complexity of the winning condition $\Psi$, because in many cases the number of types of people is polynomially bounded.

Another consequence of Theorem~\ref{thm:main_thm} are the first fixed-parameter algorithms for two important manipulation problems beyond $\RR$-{\sc Swap Bribery}.
The {\sc Resilient Budget} problem asks, for a given society whether allocating budget $B$ is sufficient in order to repel any adversary move of cost at most $B_a$ with a counter-move of cost at most $B$ (so that the winning condition is still satisfied).
Similarly, \textsc{Robust Move} asks for a move of cost at most $B$ which causes the winning condition to be satisfied even after any adversary move of cost at most $B_a$.
For formal definitions and results, cf. Sect.~\ref{subsec:robustresilient}.

\subsection{Interpretation of results}
Intuitively, the results obtained with Theorem~\ref{thm:main_thm} can be interpreted as follows.
Dodgson-\textsc{Swap Bribery} is fixed-parameter tractable parameterized by $|C|$; however, this comes with at least two limitations as compared to prior work for simpler voting rules~$\RR$.
First, our methods do not extend beyond the uniform cost scenario, and this remains a major open problem.
Second, our result requires the input election to be given in unary, while prior work allows it to be given in binary (this is sometimes called the \emph{succinct} case~\cite{FaliszewskiEtAl2009}).
This is easily explained by the different descriptive complexities of the respective voting rules: for example, while Condorcet's voting rule can be formulated as a quantifier-free PA formula, formulating Dodgson's rule requires a long disjunction of formulas which use two quantifiers and a bounded number of disjunctions.

Theorem~\ref{thm:polymanytypes-informal} lets us discuss more specifically the complexity of various voting rules.
For example, the Plurality voting rule can be expressed with a single linear inequality encoding that a preferred candidate obtained more points than the remaining candidates altogether.
Thus, all problems which can be modeled as {\sc Minimum Move} are polynomial-time solvable with the Plurality voting rule.
This interprets the result of Elkind et al.~\cite[Theorem 4.1]{ElkindEtAl2009} that Plurality-\textsc{Shift Bribery} is polynomial-time solvable: the number of potential voter types is polynomial, and Plurality has a simple description.
Continuing, we may compare $\RR$=Borda with $\RR$=Copeland.
The winning condition for $\RR$=Borda can be described with $|C|-1$ inequalities, while $\RR$=Copeland requires $\Oh(|C|^2)$ inequalities.
Thus Borda-\textsc{Swap Bribery} is solvable in time $|C|^{\Oh(|C|^2)} \log |V|$, while Copeland-\textsc{Swap Bribery} requires time $|C|^{\Oh(|C|^4)} \log |V|$.
Finally, all descriptions of Kemeny's voting rule we are aware of require $|C|!$ inequalities, and thus result in Kemeny-\textsc{Swap Bribery} being solvable in time $|C|!^{(|C|!)^{\Oh(1)}} \log |V|$.
We do not claim these complexities to be best possible, but conjecture the existence of lower bounds separating the various voting rules; in particular, we believe that Kemeny-\textsc{Swap Bribery} requires double-exponential time.

Finally, our work provides a natural next step in unifying the many different models that have been proposed for voting, bribing and manipulation problems.
In this direction, Faliszewski et al.~\cite{FaliszewskiEtAl2011} study what happens when multiple bribery and manipulation actions can occur in an election; e.g., CCAV asks for constructive control by adding voters while CCDV by deleting voters; similarly for CCAC and CCDC for adding/deleting candidates.
Faliszewski et al. unify those various (up to that point separately studied) attacks.
Similarly, Knop et al.~\cite{KnopEtAl2017} formulate the $\RR$-{\sc Multi Bribery} problem, which also incorporates swaps and perturbing approval counts.
The problem we put forward in this paper, {\sc Minimum Move}, in some sense generalizes and simplifies all those ``meta''-problems.

\subsection{Related work}
\label{sec:relatedwork}
We have reviewed most of the relevant computational social choice work already.
However, there seems to be some confusion in the literature that deserves clarification.
The paper of Faliszewski et al.~\cite{FaliszewskiEtAl2009} pioneering the concept of bribery in elections indeed considers the voting rules Kemeny, Dodgson and Young, and provides a fixed-parameter algorithm for Kemeny-\textsc{Bribery}.
There are three features of their paper that we wish to discuss.

First, turning their attention to Dodgson-\textsc{Bribery}, they write:
\begin{myquote}{0.1in}
  \emph{Applying the integer programming attack for the case of bribery within Dodgson-like election systems [...] is more complicated.
  These systems involve a more intricate interaction between bribing the voters and then changing their preferences.
  For Dodgson elections, after the bribery, we still need to worry about the adjacent switches within voters’ preference lists that make a particular candidate a Condorcet winner. [...]
  This interaction seems to be too complicated to be captured by an
  integer linear program, but building on the flavor of the Bartholdi et al.~\cite{BartholdiEtAl1989} ILP attack we can achieve the following: Instead of making $p$ a winner, we can attempt to make $p$ have at most a given Dodgson or Young score.}
\end{myquote}
They call this problem DodgsonScore-\textsc{Bribery} and provide positive results for it.
Notice, however, that finding a bribery which makes~$c^{\star}$ have a certain Dodgson score does \emph{not} prevent another candidate to have a lower score and winning the bribed election.
Thus, solving DodgsonScore-\textsc{Bribery} can be very far from the desired result.

Second, the authors then observe that a brute force approach enumerating all $|V|^{|C|!}$ briberies solves the Dodgson-\textsc{Bribery} problem in polynomial time for \emph{constantly} many candidates; however, theirs is \emph{not} a fixed-parameter algorithm for parameter $|C|$.

Third, they then introduce another voting system called Dodgson$'$, which is similar to Dodgson, and provide a fixed-parameter algorithm for winner determination.
However, as in the case of Dodgson-\textsc{Bribery}, they do not provide a fixed-parameter algorithm for Dodgson$'$-\textsc{Bribery}.

The issue is then that a subsequent paper of Falisezwski et al.~\cite{FaliszewskiEtAl2011} claims that the Dodgson rule is ``integer-linear-program implementable'' and that this implies a certain election control problem generalizing \textsc{Bribery} to be \fpt~\cite[Theorem 6.2]{FaliszewskiEtAl2011}.
We believe the authors do not sufficiently differentiate between determining the winner with \emph{one} ILP, as is the case for most simple voting rules, and with \emph{multiple} ILPs, as is the case for Dodgson.
Thus, we believe there is no evidence that the Dodgson rule is ``integer-linear-program implementable''.
Yet, this may be possible and this question still deserves attention.
Whatever the reason, we are convinced that their~\cite[Theorem 6.1]{FaliszewskiEtAl2011} does \emph{not} hold for $\RR$=Dodgson.
Hence, we believe that ours are the first fixed-parameter algorithms for any \textsc{Bribery}\hy like problem for $\RR \in$~\{Dodgson, Young\}.

\section{Preliminaries}
\label{sec:preliminaries}
Let $m,n$ be integers.
We define \mbox{$[m,n] := \{m, m+1, \ldots, n\}$} and \mbox{$[n] := [1,n]$}.
Throughout, we reserve bold face letters (e.g. $\vex,\vey$) for vectors.
For a~vector $\vex$ its $i$-th coordinate is $x_i$.

Next, we provide notions and notations for $\RR$-\textsc{Swap Bribery}.

\heading{Elections}
An election~$(C,V)$ consists of a set $C$ of candidates and a set~$V$ of voters, who indicate their preferences over the candidates in $C$, represented via a \emph{preference order} $\pref_v$ which is a total order over~$C$.
We often identify a voter $v$ with their preference order~$\pref_v$.
Denote by $\textrm{rank}(c,v)$ the rank of candidate~$c$ in~$\pref_v$; $v$'s most preferred candidate has rank 1 and their least preferred candidate has rank $|C|$.
For distinct candidates~$c,c'\in C$, write $c\pref_v c'$ if voter~$v$ prefers~$c$ over~$c'$.

\heading{Swaps}
Let $(C,V)$ be an election and let $\pref_v\in V$ be a voter.
For candidates $c,c'\in C$, a \emph{swap} $s = (c,c')_v$ means to exchange the positions of $c$ and $c'$ in~$\pref_v$; denote the perturbed order by $\pref_v^s$.
A swap~$(c,c')_v$ is \emph{admissible in $\pref_v$} if $\rank(c,v) = \rank(c',v)-1$.
A set $S$ of swaps is \emph{admissible in $\pref_v$} if they can be applied sequentially in~$\pref_v$, one after the other, in some order, such that each one of them is admissible.
Note that the perturbed vote, denoted by $\pref_v^S$, is independent from the order in which the swaps of $S$ are applied.
We extend this notation for applying swaps in several votes and denote it $V^S$.
We specify $v$'s cost of swaps by a function $\sigma^v: C\times C\rightarrow \mathbb{Z}$.

\heading{Voting rules}
A voting rule~$\RR$ is a function that maps an election $(C,V)$ to a subset $W\subseteq C$, called the \emph{winners}.
A candidate~$c\in C$ is a \emph{Condorcet winner} if any other~$c'\in C \setminus \{c\}$ satisfies $\big|\{\pref_v \in V \mid c\pref_v c' \}\big| > \big|\{\pref_v \in V \mid c' \pref_v c\}\big|$; then we say that \emph{$c$ beats $c'$ in a head-to-head contest}.
The \emph{Young score of $c \in C$} is the size of the smallest subset $V' \subseteq V$ such that $c$ is a Condorcet winner in $(C, V \setminus V')$.
Analogously, the \emph{Dodgson score of $c \in C$} is the size of the smallest admissible set of swaps $S$ such that $c$ is a Condorcet winner in $(C, V^S)$.
Then, $c$ is a \emph{Young (Dodgson) winner} if it has minimum Young (Dodgson) score.

We aim to solve the following problem:
\prob{$\RR$-\textsc{Swap Bribery}}
{An election $(C,V)$, a designated candidate $c^{\star} \in C$ and swap costs $\sigma^v: C\times C\rightarrow \mathbb{Z}$ for $v \in V$.}
{A set $S$ of admissible swaps of minimum cost so that $c^{\star}$ wins the election $(C, V^S)$ under the rule~$\RR$.}

\section{Moves in Societies and Presburger Arithmetic}
Let $\tau \in \N$ be the number of \emph{types of people}.

\begin{definition}
  A \emph{society} is a non-negative $\tau$-dimensional integer vector $\ves = (s_1, \dots, s_\tau)$.
\end{definition}

In most problems, we are interested in modifying a society by moving people between types.

\begin{definition}
  A \emph{move} is a vector $\vem = (m_{1,1}, \dots, m_{\tau, \tau})\in\mathbb Z^{\tau^2}$.
\end{definition}
\noindent
Intuitively, $m_{i,j}$ is the number of people of type $i$ turning type~$j$.
\begin{definition}
  A \emph{change} is a vector $\veDelta = (\Delta_1, \dots, \Delta_\tau)\in\mathbb Z^{\tau}$ whose elements sum up to $0$.
  We say that \emph{$\veDelta$ is the change associated with a move $\vem$} if $\Delta_i = \sum_{j=1}^\tau m_{j,i} - m_{i,j}$, and we write $\veDelta = \Delta(\vem)$.
  A change $\veDelta$ is \emph{feasible with respect to society $\ves$} if $\ves + \veDelta \geq \mathbf{0}$, i.e., if applying the change $\veDelta$ to $\ves$ results in a society.
\end{definition}

One more useful notion is that of a move costs vector:
\begin{definition}
  A \emph{move costs vector} is a vector $\vecc = (c_{1,1}, \dots, c_{\tau, \tau})$ in $(\N \cup \{+\infty\})^{\tau^2}$ which satisfies the triangle inequality, i.e., $c_{i,k} \leq c_{i,j} + c_{j,k}$ for all distinct $i,j,k$.
\end{definition}

\begin{definition}[$(\vecc, k)$-move]
Let $k \in \N$ and $\vecc$ be a move costs vector.
A move $\vem$ is a \emph{$(\vecc, k)$-move} if $\vecc^\transpose \vem \leq k$.
\end{definition}

Finally, we want to check that the society (e.g., resulting from applying some moves) satisfies a certain desired condition.
This condition depends on the problem we are modeling: in variants of bribery, it says that a preferred candidate is elected as a winner or to be a part of a committee under a given voting rule; in the context of lobbying, it says that a preferred agenda was selected.
To allow large expressibility, we make a very broad definition:
\begin{definition}
    A \emph{winning condition of width $\tau$} is a predicate $\Psi(\ves)$ with $\tau$ free variables.
\end{definition}

\subsection{Presburger Arithmetic}
For two formulas $\Phi$ and $\Psi$, we write $\Phi \cong \Psi$ to denote their equivalence.

\begin{definition}[Presburger Arithmetic]
  Let
  \[
    \hat{\P}_{0,(n_0),\delta, \gamma,\alpha, \beta} = \left\{\hat{\Psi}(\vex_0) \right\}
  \]
  be the set of quantifier-free Presburger Arithmetic (PA) formulas with $n_0$ free variables $\vex_0$ which are a disjunction of at most $\delta$ conjunctions of linear inequalities $\vea^{\transpose}\vex_0 \leq b$, each of length at most $\gamma$, where $\|\vea\|_\infty \leq \alpha$ and $|b| \leq \beta$ for each inequality.
  Then, let
  \[
    \P_{0,(n_0),\delta, \gamma,\alpha, \beta} = \left\{\Psi(\vex_0) \mid \exists \hat{\Psi}(\vex_0) \in \hat{\P}_{0, (n_0), \delta, \gamma, \alpha, \beta}: \hat{\Psi} \cong \Psi \right\}
  \]
  be the set of PA formulas equivalent to some DNF formula from $\hat{\P}_{0, (n_0), \delta, \gamma, \alpha, \beta}$.
  Finally, let
  \[
    \P_{k,\ven,\delta, \gamma,\alpha, \beta} = \left\{\Psi(\vex_0) \equiv \exists / \forall \vex_1 \exists / \forall \vex_2 \cdots \exists / \forall \vex_k: \Phi(\vex_0, \vex_1, \dots, \vex_k)\right\}
  \]
  be the set of PA formulas with \emph{quantifier depth} $k$, $n_0$ \emph{free variables} $\vex_0$, and \emph{dimension} $\ven = (n_0, n_1, \dots, n_k)$; here, $x_i \in \Z^{n_i}$ for each $i$ and $\Phi(\vex_0, \dots, \vex_k) \in \P_{0, (n), \delta, \gamma, \alpha, \beta}$ with $n = \sum_{i=1}^k n_i$.
  The \emph{length} of $\Psi(\vex_0)$ is the number of symbols it contains, which is polynomially bounded in $(n,\gamma,\delta)$.
  By $\exists \P$ and $\forall \P$ we denote the sets of PA formulas whose leading quantifier is $\exists$ or $\forall$, respectively.
\end{definition}

\begin{example*}
  A simple example of PA is the following formula.
  \begin{align*}
    \Psi(y) \equiv \forall x_1 x_2 \exists z_1 z_2 z_3 & ~~:~~  \left(x_1 + y = z_3 \land \ y\ge 0 \right) \lor \\
			& \big(3x_1 + 10y - 3z_1 \le 13 \land 2x_2 + 5y - z_2 \le 11 \\
			& \land x_1 + 1y - z_3 \ge 9 \land z_1 -z_2 + 2z_3 \le 6 \big)
  \end{align*}
  Here $k = 2, \ven = \left(1,2,3\right), \delta=2, \gamma=4, \alpha=10$, and $\beta=13$.
\end{example*}
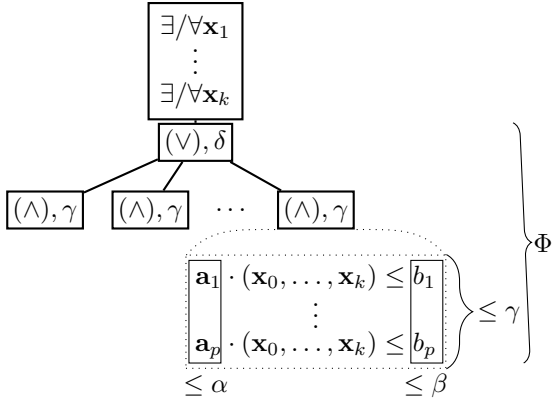
\begin{figure}[bt]
  \usetikzlibrary{calc,positioning,decorations.pathreplacing,fit}

\begin{tikzpicture}
\tikzstyle{edge}=[thick]
\tikzstyle{branchPoint}=[draw, thick, inner sep=2pt]
\tikzstyle{usual}=[inner sep=2pt]
\tikzstyle{nafouknuti}=[draw, dotted, inner sep=2pt]

\newcommand{\vdist}{.9}

\node[usual] (x1) {$\exists / \forall \vex_1$};
\node[usual] (xk) at ($(x1) - (0,\vdist)$) {$\exists / \forall \vex_k$};
\node[usual] (xes) at ($(x1.south)!.5!(xk.north) + (0,.1)$) {\vdots};
\node[branchPoint] at ($(x1) - (0,1.7*\vdist)$) (Disj) {$(\lor), \delta$};
\node[branchPoint, fit = (x1)(xk)(xes)] (x) {};

\draw[edge] (x) -- (Disj);

\node[branchPoint] (land1) at ($(Disj) - (2,  \vdist)$) {$(\land), \gamma$};
\node[branchPoint] (land2) at ($(Disj) - (.6, \vdist)$) {$(\land), \gamma$};
\node[branchPoint] (landDelta) at ($(Disj) - (0, \vdist) + (1.6, 0)$) {$(\land), \gamma$};
\node at ($(land2)!.5!(landDelta)$) {\dots};

\node at ($(landDelta) - (0,\vdist)$) (a1) {$\vea_1\cdot(\vex_0,\ldots,\vex_k) \le b_1$};
\node at ($(a1) - (0,\vdist)$) (ap) {$\vea_p\cdot(\vex_0,\ldots,\vex_k) \le b_p$};
\node at ($(a1)!.5!(ap) + (0,.05)$) {\vdots};

\node[nafouknuti,fit=(a1)(ap),inner sep=0pt] (fitA) {};
\draw[dotted] (a1.north west) to[in=180, out=50] (landDelta.south west);
\draw[dotted] (a1.north east) to[in=0, out=130] (landDelta.south east);
\draw[decorate,decoration={brace,amplitude=10pt}] (fitA.north east) -- (fitA.south east) node [midway,xshift=0.7cm]  {$\le \gamma$};

\coordinate (ap_a) at ($(ap.south west) + (.5,0) + (-1pt,2pt)$);
\coordinate (ap_b) at ($(ap.south west) + (1pt,2pt)$);
\draw[black] ($(a1.north west) + (1pt, -2pt)$) rectangle  (ap_a);
\node at ($(ap_a.south) - (.2,.3)$) {$\le\alpha$};

\coordinate (bp_a) at ($(ap.south east) - (.5,0) + (1pt,2pt)$);
\draw[black] ($(a1.north east) - (1pt, 2pt)$) rectangle  (bp_a);
\node at ($(bp_a.south) - (-.2,.3)$) {$\le\beta$};

\foreach \dest in {land1, land2, landDelta}
  \draw[edge] (Disj) -- (\dest);

\draw[decorate,decoration={brace,amplitude=5pt}] ($(Disj.north east) + (3.72cm,0cm)$) -- ($(bp_a) + (1.5cm,0)$) node [midway,xshift=0.33cm]  {$\Phi$};

\end{tikzpicture}
  \caption{Visualization of PA formula $\Psi(\vex_0)$ in DNF.}
\end{figure}

We study winning conditions $\Psi(\ves)$ expressible in PA, and state our complexity results with respect to the \emph{descriptive complexity} of~$\Psi$, which is its number of variables, quantifiers, logical connectives, and unary encoding length of coefficients and constants.

\heading{Vocabulary}
We express relevant definitions by simple PA formulas over integral variables and with integer coefficients and constants:
\begin{itemize}
    \item $\texttt{society}(\ves) \equiv \ves \geq \mathbf{0} \in \P_{0, (\tau), 0, \tau, 1, 0}$,
    \item $\texttt{move}(\vem)  \in \P_{0, (\tau^2), 0, 0, 0, 0}$,
    \item $\veDelta = \Delta(\vem)$ is a linear map $\Delta_i = \sum_{j=1}^\tau m_{j,i} - m_{i,j}$; thus if we let $\Psi(\vem, \veDelta) \equiv \veDelta = \Delta(\vem)$, then $\Psi(\vem, \veDelta) \in \P_{0, (\tau + \tau^2), 0, \tau, 1, 0}$
    \item$\texttt{feasible}(\ves, \veDelta) \equiv \left( \ves + \Delta(\vem) \geq \mathbf{0} \wedge \mathbf{1}^\intercal \veDelta = 0 \right) \in \P_{0, (2\tau), 0, \tau, 2, 0}$, and,
    \item$(\vecc, k)$-$\texttt{move}(\vem) \equiv \vecc^\transpose \vem \leq k \in \P_{0, (\tau), 0, 1, \|\vecc\|_\infty, k}$.
\end{itemize}

We note that, for elections, our definition of winning condition generalizes the notion of ``linearly-definable voting rules'' by Dorn and Schlotter~\cite{DornSchlotter2012}.
Precisely, those rules belong to $\exists \P$ with $k=1$; we will show that Dodgson and Young are in $\exists \P$ with $k=2$.
Thus, our winning conditions capture an extensive set of voting rules.

\subsection{Modeling Problems as Minimum Move}
We model moves in societies by the following general problem:
\prob{{\sc Minimum Move}}
{A society $\ves$, an objective function $f: \Z^{\tau^2} \to \Z$, a winning condition $\Psi$.}
{A move $\vem$ minimizing $f(\vem)$ s.t. $\Psi(\ves + \Delta(\vem)) \wedge \texttt{feasible}(\ves, \Delta(\vem))$.}

It models many well-studied problems:

\heading{Multi bribery}
Knop et al.~\cite{KnopEtAl2017} introduce a generalization of various bribery problems called $\RR$-\textsc{Multi Bribery}.
Informally, we are given an election $(C,V)$ where each voter further has an approval count, and is either active or latent the status of which can be changed at certain cost; likewise, there are costs for perturbing their preference order or approval count.
This problem generalizes \textsc{Bribery}, \textsc{\$Bribery}, \textsc{Swap Bribery}, \textsc{Shift Bribery}, \textsc{Support Bribery}, \textsc{Extension Bribery}, \textsc{Possible Winner}, \textsc{Constructive Control by Adding/Deleting Voters} and other problems.

Notice that there are at most $|C|!$ possible preference orders, at most $|C|$ possible approval counts, and $2$ states ``active'' or ``latent''.
Thus, there are at most $\tau \leq 2 |C|\cdot |C|!$ potential types of voters, and we can express the input election as a society $\ves$.
A move costs vector~$\vecc$ describing the costs of moving a voter from one type to another is obtained by calculating (possibly using a shortest path algorithm) the least costs based on the given cost functions.
Let $\Psi(\ves)$ be a PA formula which is satisfied if the preferred candidate wins under the voting rule $\RR$ in a society $\ves$.
Then, a bribery of minimum cost in a $\RR$-\textsc{Multi Bribery} instance can be modeled as solving {\sc Minimum Move} with $f(\vem) = \vecc^{\transpose} \vem$.
This modeling, combined with Theorem~\ref{thm:main-fpt-swapbribery} and Corollary~\ref{cor:main_cor}, yields Corollary~\ref{cor:main-fpt-other}.

\heading{Multiwinner elections}
Bredereck et al.~\cite{BredereckEtAl2016b} study the complexity of \textsc{Shift Bribery} in committee elections, that is, in elections with multiple winners.
The modeling is exactly the same as above, except for the winning condition $\Psi$ which will be a long disjunction over all committees which include the preferred candidate.

\heading{Lobbying in referenda}
Bredereck et al.~\cite{BredereckEtAl2014c} study the complexity of \textsc{Lobbying} in referenda.
There, voters cast ballots with their ``yes''/``no'' answers to issues.
The task is to push an agenda, i.e., a certain outcome.
Again, voters fall into groups according to their ballots, the costs of changing their opinions forms a move costs vector, and a winning condition $\Psi$ expresses that the selected agenda succeeded.

\section{Sentences With Two Quantifiers}
We shall now introduce the building blocks of our proof of Theorem~\ref{thm:main_thm}.
Woods~\cite{Woods2015} gives an algorithm that efficiently converts any quanti\-fier-free PA formula $\Phi$ into an equivalent DNF formula $\hat{\Phi}$ of bounded length:
\begin{lemma}[{Woods~\cite[Proposition 5.1]{Woods2015}}]
\label{lem:woods_dnf}
  Let~$\Phi(\vex)$ be a quanti\-fier-free PA formula with $\vex \in \Z^d$ containing $N$ inequalities, whose coefficients and right-hand sides are bounded in absolute value by $\alpha$ and $\beta$, respectively.
  Then $\Phi(\vex)$ can be converted into an equivalent DNF formula $\hat{\Phi}(\vex)$ with at most $\delta = N^{\Oh(d)}$ disjunctions, each containing at most $N$ conjunctions with the same bound on $\alpha$ and $\beta$.
\end{lemma}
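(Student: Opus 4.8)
The natural route to this lemma is through the geometry of \emph{sign patterns}, so the first step of my plan is to normalize the atoms of $\Phi$. Over $\Z$ we may assume every atom is a half-space predicate $\vea_i^\transpose\vex \le b_i$ with $\|\vea_i\|_\infty \le \alpha$ and $|b_i|\le\beta$: a strict inequality $\vea^\transpose\vex < b$ becomes the atom $\vea^\transpose\vex \le b-1$, an equality becomes the conjunction of $\vea^\transpose\vex\le b$ and $(-\vea)^\transpose\vex\le -b$, and the negation of $\vea^\transpose\vex\le b$ becomes $(-\vea)^\transpose\vex\le -b-1$ --- in each case the coefficient vector keeps its shape and the right-hand side grows by at most an additive constant, which the statement absorbs (or is $\le 2\beta$ when $\beta\ge 1$). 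After this preprocessing $\Phi$ is a Boolean combination of the $N$ atoms $A_i := [\,\vea_i^\transpose\vex \le b_i\,]$.

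Second, I would pass to the real arrangement $\mathcal{H}$ of the $N$ hyperplanes $\{\vex\in\R^d : \vea_i^\transpose\vex = b_i\}$. Associate to each $\vex\in\R^d$ its sign vector $\sigma(\vex)\in\{-,0,+\}^N$ with $\sigma(\vex)_i=\sgn(\vea_i^\transpose\vex-b_i)$; any two points --- in particular any two integer points --- with the same sign vector agree on every atom, hence on $\Phi$. Thus $\Phi$ is constant on each cell $R_s := \{\vex : \sigma(\vex)=s\}$ of $\mathcal H$, and writing $T$ for the set of realizable sign vectors $s$ on which $\Phi$ is true we get $\Phi(\vex)\cong\bigvee_{s\in T}(\vex\in R_s)$, which over $\Z$ (indeed over $\R$) is equivalent to $\Phi$. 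Membership $\vex\in R_s$ is a conjunction of at most $N$ atoms of the required shape --- one per index $i$, an inequality if $s_i\neq 0$ and an equality if $s_i=0$ (if one insists on only $\le$-atoms each equality splits into two, degrading the bound by the harmless factor $2$). The quantitative heart of the argument is then the classical fact that an arrangement of $N$ hyperplanes in $\R^d$ has only $N^{\Oh(d)}$ faces of all dimensions --- equivalently, $N$ affine functions on $\R^d$ realize only $N^{\Oh(d)}$ sign vectors (Warren's theorem, with a further polynomial factor to absorb the zero signs) --- so $|T|\le N^{\Oh(d)}$ and the displayed formula is the desired DNF with $\delta = N^{\Oh(d)}$. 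To make the conversion algorithmic within the same running time, I would construct $\mathcal H$ incrementally (inserting the hyperplanes one by one and maintaining its face lattice, at total cost $N^{\Oh(d)}$), take a rational representative point in each face, evaluate the propositional formula $\Phi$ there, and emit the conjunction cutting out that face whenever $\Phi$ is true.

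The step I expect to be the main obstacle is not the cell count --- that is textbook --- but the careful interface between the real arrangement and the integer semantics of PA, together with the precise accounting of conjuncts and bounds. One must verify that lower-dimensional faces containing integer points are still captured (automatic once $R_s$ is written with (in)equalities that are simply re-read over $\Z$), that replacing negated and strict atoms leaves $\alpha$ untouched and inflates $\beta$ only additively, and that the $s_i=0$ faces do not blow the ``at most $N$ conjunctions'' guarantee by more than the slack already present in $N^{\Oh(d)}$. These are exactly the bookkeeping details that Woods' proof pins down, and I would follow that accounting rather than reprove it from scratch.
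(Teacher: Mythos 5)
The paper does not actually prove this lemma: it is imported as a black box from Woods~\cite[Proposition 5.1]{Woods2015}, so there is no in-paper argument to compare against. Your reconstruction via sign patterns of the hyperplane arrangement is correct and is the standard (and, in essence, Woods') argument: the $N$ atoms partition $\Z^d$ into $N^{\Oh(d)}$ classes by the face-count bound for arrangements of $N$ hyperplanes in $\R^d$, $\Phi$ is constant on each class, and the DNF is the disjunction over the true classes. Two bookkeeping points are worth making explicit. First, the lemma as stated promises ``the same bound on $\alpha$ and $\beta$,'' whereas handling negated atoms over $\Z$ as $(-\vea)^\transpose\vex \le -b-1$ gives $\beta+1$; this is harmless for every use in the paper (the bounds are only ever needed up to polynomial blow-up), but it is a genuine, if trivial, discrepancy with the literal statement. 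Second, you can avoid both the ternary sign vectors and the factor-$2$ loss in the conjunct count entirely: instead of $\sigma(\vex)_i=\sgn(\vea_i^\transpose\vex-b_i)\in\{-,0,+\}$, record only the truth value of each atom, i.e.\ whether $\vea_i^\transpose\vex\le b_i$ or $\vea_i^\transpose\vex\ge b_i+1$. Two integer points with the same $0/1$ pattern agree on $\Phi$, the number of realizable patterns is still bounded by the number of faces of the arrangement (each pattern is a union of faces, and distinct patterns are separated by some hyperplane), and each disjunct is then a conjunction of exactly $N$ inequalities of the required shape, matching the stated bound without any slack. With that adjustment your proof is complete; the algorithmic construction by incremental insertion and evaluation of $\Phi$ at a representative of each face runs within the $N^{\Oh(d)}\cdot\poly$ budget the paper needs.
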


It is often useful for the quantifiers of a PA formula to range over integer points of polyhedra, e.g. $\forall \vex \in Q$ (we do not write $Q \cap \Z^{n}$ for brevity, as we assume everything to be integer); again, our definition is not restrictive by the fact that we can always rearrange:
\begin{align*}
  \Psi(\vex_0) \equiv \exists \vex_1 \in Q_1 \cdots \forall/\exists \vex_k \in Q_k: &\, \Phi(\vex_0, \vex_1, \dots, \vex_k) \equiv \\
  \exists \vex_1 \cdots \forall/\exists \vex_k: & \, \Big(\Phi(\vex_0, \vex_1, \dots, \vex_k) \\ \bigwedge (\vex_1 \in Q_1 \wedge \vex_3 \in Q_3 \cdots) \wedge &
  \bigvee (\vex_2 \not\in Q_2 \vee \vex_4 \not\in Q_4 \cdots) \Big)
\end{align*}

\heading{Parametric ILP}
A special case of PA are \emph{parametric ILPs}, which can be viewed\footnote{Parametric ILPs are typically viewed as ILPs with a varying right hand side, that is, deciding the sentence $\forall \veb \exists \vex: A \vex \leq \veb$; it is known that our formulation is equivalent, as shown by Crampton et al.~\cite{CramptonEtAl2017}, who call it \textit{ILP Resiliency}.} as deciding the sentence
\[
  \forall \vex \in \Z^p: A \vex \leq \veb \, \exists \vey \in \Z^n: B(\vex, \vey) \leq \vece,
\]
where $A \in \Z^{\ell \times p}$ and $B \in \Z^{m \times n}$ are integer matrices.
A consequence of an algorithm of Eisenbrand and Shmonin~\cite{EisenbrandShmonin2008} is the following:
\begin{corollary}[{\cite[Theorem 4.2]{EisenbrandShmonin2008}, \cite[Corollary 1]{CramptonEtAl2017}}]
\label{cor:es_pilp}
  Any parametric ILP whose entries of $A, B$, $\veb$ and $\vece$ are given in unary, is fixed-parameter tractable when parameterized by $n,m$ and $p$.
\end{corollary}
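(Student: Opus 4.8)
The plan is to derive the statement from Eisenbrand and Shmonin's algorithm for parametric integer programming in fixed dimension~\cite{EisenbrandShmonin2008} together with the reformulation of Crampton et al.~\cite{CramptonEtAl2017}; the work is to recall why their result yields exactly the claimed fixed-parameter statement. Write the parametric ILP as the sentence $\forall \vex \in \Z^p\ (A\vex \le \veb)\ \exists \vey \in \Z^n\colon B(\vex,\vey) \le \vece$, let $Q = \{\vex \in \R^p \mid A\vex \le \veb\}$ be the parameter region, let $\Pi = \{(\vex,\vey)\in\R^{p+n} \mid B(\vex,\vey)\le\vece\}$, and let $\mathcal S = \{\vex\in\Z^p \mid \exists\,\vey\in\Z^n,\ (\vex,\vey)\in\Pi\}$ be the integer projection of $\Pi$ onto the parameter coordinates. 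Then the sentence holds if and only if $(Q\cap\Z^p)\subseteq\mathcal S$.

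The heart of the argument is the structural theorem of Eisenbrand and Shmonin: for the polyhedron $\Pi\subseteq\R^{p+n}$ with $m$ constraints one can compute, in time bounded by a function of $n,m,p$ times a polynomial of the input size, a description of the integer projection as a finite union $\mathcal S=\bigcup_{i=1}^{t}(P_i\cap\Z^p)$ of integer points of rational polyhedra $P_i\subseteq\R^p$, where $t$, the number of facets of each $P_i$, and the bit sizes of the inequalities describing the $P_i$ are all bounded by a function of $n,m,p$ times a polynomial of the input size. When the coefficients of $A,B,\veb,\vece$ are given in unary, a polynomial of the input size is a polynomial of the unary length, so all of these quantities have the form $g(n,m,p)\cdot\poly(\text{input})$.

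Given this description, the plan is to reduce the universally quantified question to finitely many ordinary integer feasibility tests in the fixed dimension $p$. The inclusion $(Q\cap\Z^p)\subseteq\bigcup_i(P_i\cap\Z^p)$ fails exactly when $\bigl(Q\setminus\bigcup_{i=1}^{t}P_i\bigr)\cap\Z^p\neq\emptyset$. Writing $Q\setminus\bigcup_iP_i=\bigcap_{i=1}^{t}(Q\setminus P_i)$ and, for each $i$, $Q\setminus P_i=\bigcup\{\vex\in Q \mid \vea^\transpose\vex\ge b+1\}$ over the integral facet inequalities $\vea^\transpose\vex\le b$ of $P_i$ (the shift by $1$ is valid since only integer points matter), and distributing the intersection over these unions, we obtain $Q\setminus\bigcup_iP_i$ as a union of at most $\prod_{i=1}^{t}(\text{number of facets of }P_i)=g'(n,m,p)\cdot\poly(\text{input})$ polyhedra, each in dimension $p$ and described by at most $\ell+t$ inequalities with coefficients bounded as above. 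Each such polyhedron is tested for integer feasibility by an integer programming algorithm in the fixed dimension $p$, which runs in time polynomial in its encoding length, hence $g''(n,m,p)\cdot\poly(\text{input})$ under unary encoding; the sentence is true iff all tests report infeasibility. Summing over the branches gives the claimed running time $f(n,m,p)\cdot\poly(\text{input})$.

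The main obstacle lies entirely inside the structural theorem invoked in the second paragraph — showing that the integer projection of a polyhedron is representable as a bounded union of integer points of polyhedra with controlled coefficient sizes. That ingredient rests on flatness and partitioning arguments for lattice points in polyhedra together with Helly-type bounds (Doignon--Bell--Scarf) limiting the number of constraints needed; since we may cite it (their Theorem~4.2, and Corollary~1 of Crampton et al.), the remaining work is the routine reduction above and the bookkeeping verifying that the unary encoding makes each quantity of the form $f(n,m,p)\cdot\poly(\text{input})$.
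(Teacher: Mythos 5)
The paper does not actually prove this statement---it is quoted as a black box from Eisenbrand--Shmonin \cite[Theorem 4.2]{EisenbrandShmonin2008} and the FPT rephrasing in \cite[Corollary 1]{CramptonEtAl2017}---so the only thing to check is whether your reconstruction of why those results give the claim is sound. It is not, for two reasons. The central problem is the ``structural theorem'' you attribute to Eisenbrand and Shmonin in your second paragraph: the integer projection $\mathcal S=\{\vex\in\Z^p\mid\exists\vey\in\Z^n:(\vex,\vey)\in\Pi\}$ of a polyhedron is \emph{not} in general a finite union $\bigcup_{i=1}^{t}(P_i\cap\Z^p)$ of integer points of polyhedra. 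Take $p=n=1$, $m=2$, and $\Pi=\{(x,y): x-2y\le 0,\ -x+2y\le 0\}$; then $\mathcal S=2\Z$, which is not a finite union of integer intervals, even though all parameters and all entries are constant. Integer projections are only \emph{semilinear} (unions of polyhedra intersected with lattice cosets), and once congruence constraints enter, your complementation-and-distribution step in the third paragraph no longer reduces to integer feasibility of polyhedra in dimension $p$. What Eisenbrand and Shmonin actually prove is the algorithmic statement directly: via the flatness theorem they produce boundedly many candidate solutions that are affine functions of the parameters composed with floor operations, and reduce the $\forall\exists$ sentence to boundedly many mixed-integer programs in fixed dimension; there is no polyhedral description of $\mathcal S$ to complement.

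Even if one granted your structural claim, the bookkeeping in the third paragraph fails independently: distributing $\bigcap_{i=1}^{t}(Q\setminus P_i)$ over the facet-wise unions yields $\prod_{i=1}^{t}(\text{number of facets of }P_i)$ cells, which is exponential in $t$, not equal to $\sum_i(\text{number of facets})$; and since your own bound on $t$ has the form $g(n,m,p)\cdot\poly(\text{input})$, this product is exponential in the input size, destroying the claimed $f(n,m,p)\cdot\poly(\text{input})$ running time. The honest proof of the corollary is to cite the Eisenbrand--Shmonin algorithm as a whole (tracking, as Crampton et al.\ do, that its running time becomes $f(n,m,p)\cdot\poly$ once $m$ is also a parameter and the entries are unary), rather than to re-derive it from a projection decomposition that the cited theorem does not supply.
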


\heading{ILP and disjunctions}
We shall use a folklore result about implementing disjunctions in ILP when the domains of variables can be bounded.
For that, we need another definition.

\begin{definition}[$B$-bounded, $B$-small PA formula]
  Let $\Psi(\vex_0) \equiv \exists / \forall \vex_1 \cdots \exists / \forall \vex_k: \Phi(\vex_0, \dots, \vex_k)$ be a PA formula, and let $\Psi_B(\vex_0) \equiv \exists / \forall \vex_1 \in [-B,B]^{n_1} \cdots \exists / \forall \vex_k \in [-B,B]^{n_k}: \Phi(\vex_0, \dots, \vex_k)$.
  Then we say that $\Psi(\vex_0)$ is~\emph{$B$-bounded} if
  \[
    \left\{\vex \in \Z^{n_0} \mid \Psi_B(\vex) \right\} \cap [-B,B]^{n_0} = \left\{ \vex \in \Z^{n_0} \mid \Psi(\vex) \right\},
  \]
  i.e., the set of feasible solutions does not change by restricting all quantifiers and free variables to the corresponding box of size $B$.

  Moreover, we say that any $\Psi \in \P_{k, \ven, \delta, \gamma, B, B}$ is~\emph{$B$\hy small} if it is $B$-bounded, that is, its coefficients and constants are bounded by~$B$.
\end{definition}

A special case are ILPs which are $B$-small; they correspond to PA formulas with $k=0$ and $\delta = 0$; for such formulas we show:
\begin{lemma}[{ILP disjunctions [folklore]}]
\label{lem:disjunctions}
  Let $A_i \vex \leq \veb_i$ for $i \in [d]$ be $B$\hy small ILPs with $A_i \in \Z^{m \times n}$ for each $i \in [d]$.
  Then, a~$(B^2 n)$\hy small system $A \vex \leq \veb$ with $A \in \Z^{(md+d+1) \times (n + d)}$ can be constructed in time $\Oh\left(dm + n + \sum_{i=1}^d \l A_i, b_i \r\right)$ such that
  \[
    \exists (\vex, \vey) \in \Z^{n+d}: A (\vex, \vey) \leq \veb \quad \Longleftrightarrow  \quad \exists \vex \in \Z^n :\bigvee_{i \in [d]} A_i \vex \leq \veb_i \enspace .
  \]
\end{lemma}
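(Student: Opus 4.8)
The plan is to realize the disjunction $\bigvee_{i\in[d]} A_i\vex\le\veb_i$ by the classical ``big-$M$'' trick, using $d$ fresh binary indicator variables $y_1,\dots,y_d$, one per disjunct, with the constraint $\sum_{i} y_i \ge 1$ (and $0\le y_i\le 1$) forcing at least one disjunct to be ``switched on''. Concretely, for each $i\in[d]$ I would add the block of $m$ inequalities
\[
  A_i \vex \le \veb_i + M(1-y_i)\mathbf{1},
\]
so that when $y_i=1$ the $i$-th original system is imposed verbatim, and when $y_i=0$ each of its rows becomes $A_i\vex \le \veb_i + M\mathbf 1$, which is vacuous provided $M$ is chosen large enough to dominate the left-hand side over the box where $\vex$ lives. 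Together with $\sum_i y_i\ge 1$, $\vezero\le\vey\le\mathbf 1$, this gives the matrix $A$ of the claimed shape: $md$ rows for the $d$ blocks, $d$ rows for the upper bounds $y_i\le 1$ (lower bounds $y_i\ge0$ and the row $\sum y_i\ge1$ account for the remaining $d+1$ rows in the stated count, after sign flips), acting on the $n+d$ variables $(\vex,\vey)$.

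The key quantitative step is choosing $M$ and verifying the ``$B^2n$-small'' bound. Since each $A_i\vex\le\veb_i$ is $B$-small, it is in particular $B$-bounded, so without loss of generality the $\vex$ we look for lies in $[-B,B]^n$; then for any row $\vea$ of any $A_i$ we have $|\vea^\transpose\vex|\le \|\vea\|_\infty\cdot\|\vex\|_1 \le B\cdot nB = nB^2$, so taking $M := nB^2$ (or a small constant multiple thereof to absorb the $|b_i|\le B$ term, still $O(nB^2)$) makes the deactivated rows trivially satisfied while keeping every entry of $A$ and $\veb$ bounded by $O(nB^2)=O(B^2n)$. One then checks both implications: ($\Leftarrow$) given $\vex$ satisfying disjunct $i$, set $y_i=1$, all other $y_j=0$, and the block for $i$ holds by hypothesis while every other block is slack by the bound on $M$; ($\Rightarrow$) given a feasible $(\vex,\vey)$, the constraint $\sum y_i\ge1$ with $y_i\in\{0,1\}$ (forced since $y_i\in[0,1]\cap\Z$) yields some $y_i=1$, and the corresponding block reads exactly $A_i\vex\le\veb_i$. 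The running time $\Oh(dm+n+\sum_i\langle A_i,b_i\rangle)$ is just the cost of writing down the blocks and the bookkeeping rows.

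I do not expect a genuine obstacle here; the only thing requiring care is bookkeeping — getting the row count $md+d+1$ and column count $n+d$ to match exactly (one has to be careful whether $y_i\ge0$ is counted, whether $\sum y_i \ge 1$ is written as one inequality, and that integrality of $\vey$ together with the bounds $[0,1]$ indeed forces $\vey\in\{0,1\}^d$), and confirming that the resulting system is $B$-bounded in the sense of the definition, i.e.\ that restricting the $\vex$-variables to the box $[-B^2n, B^2n]^n$ (indeed already to $[-B,B]^n$) does not lose solutions — which is immediate from $B$-boundedness of each $A_i\vex\le\veb_i$ and the fact that the $\vey$-variables are explicitly boxed. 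The mild subtlety worth a sentence in the write-up is that the definition of $B$-small refers to formulas in $\P_{k,\ven,\delta,\gamma,B,B}$ with $k=0,\delta=0$, so one should note that the constructed system is a plain conjunction (a single ILP) of the right size, and that ``$(B^2n)$-small'' is meant with this $k=0,\delta=0$ reading.
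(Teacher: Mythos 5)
Your proposal is correct and is essentially the paper's own proof: the same big-$M$ construction with $d$ binary indicator variables and $M = B^2 n$ chosen via the $B$-boundedness of each system. The only (cosmetic) deviation is that you use $\sum_i y_i \ge 1$ together with explicit bounds $0 \le y_i \le 1$, which costs $md+2d+1$ rows rather than the stated $md+d+1$; the paper instead imposes $\sum_i y_i = 1$ with $y_i \ge 0$, which makes the upper bounds on $\vey$ redundant and matches the claimed row count.
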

\begin{proof}
  Let $M = B^2 n$, let $y_i$ for $i \in [d]$ be binary variables, and consider the following system:
  \begin{equation*}
    \sum_{i=1}^d y_i = 1 \bigwedge
                 y_i \geq 0, A_i \vex \leq \veb_i + M(1-y_i)~\mbox{for all}~i \in [d] \enspace .
  \end{equation*}
  Assume it has an integer solution $\mathbf{y}$.
  Then there is an index $i \in [d]$ such that $y_i=1$ and thus $A_i \vex \leq \veb_i + \mathbf{0}$ holds; thus, the system $A_i \vex \leq \veb_i$ has an integer solution.
  In the other direction, assume that the system $A_i \vex \leq \veb_i$ has a~solution $\vex$; then let $y_i = 1$.
  We shall prove that $A_{i'} \vex \leq \veb_{i'} + M$ holds for all $i' \neq i$.
  Since $A_{i'} \vex \leq \veb_{i'}$ is $B$\hy small, each of its row sums has $n$ terms which are a~multiple of two numbers, each bounded by $B$, and thus is at most $B^2 n$.
  Moreover, since $y_i=1$, we have $y_{i'} = 0$ and thus the right hand side is $\veb_{i'} + M$ and every assignment of $\vex$ feasible for $A_i \vex \leq \veb_i$ satisfies it.
  Clearly, the new system has $n+d$ variables, $md + d+1$ inequalities, is $B$-bounded and $\|A\|_\infty = B^2 n$ and thus it is $(B^2 n)$\hy small, and can be constructed in the claimed time.
\end{proof}

We now prove Theorem~\ref{thm:main_thm}; we restate it in formal terms here:
\begin{theorem}[formal version of~Theorem~\ref{thm:main_thm}]
\label{thm:main_thm2}
  Let $\Psi$ be a $\beta$-small $\P_{2,\ven,\delta,\gamma,\alpha, \beta}$ sentence (i.e., without free variables and thus $n_0 = 0$).
  Then $\Psi$ can be decided in time $g(\ven, \delta, \gamma) \poly(\alpha, \beta)$ for some computable function $g$.
\end{theorem}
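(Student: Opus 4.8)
The plan is to handle the sentence $\Psi \equiv Q_1 \vex_1\, Q_2 \vex_2: \Phi(\vex_1,\vex_2)$ (with $Q_1, Q_2 \in \{\exists,\forall\}$) by a case analysis on the quantifier pattern, reducing each case to the parametric ILP machinery of Corollary~\ref{cor:es_pilp}. First I would normalize: by negating $\Psi$ if necessary (which swaps $\exists\leftrightarrow\forall$ and complements $\Phi$), it suffices to decide sentences whose leading quantifier is $\forall$, i.e. $\forall \vex_1\, \exists \vex_2: \Phi(\vex_1,\vex_2)$ and $\forall \vex_1\, \forall \vex_2: \Phi(\vex_1,\vex_2)$; the latter is the negation of an $\exists\exists$ sentence, which after DNF conversion (Lemma~\ref{lem:woods_dnf}) and pushing the existential quantifiers into each disjunct becomes a finite disjunction of plain integer feasibility problems, each solvable by Lenstra-type algorithms with the stated unary complexity. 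So the real work is the genuine alternation $\forall \vex_1\, \exists \vex_2: \Phi$.

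For the $\forall\exists$ case, the obstacle is that $\Phi$ is not a single conjunctive system but a DNF: $\Phi \equiv \bigvee_{j=1}^{\delta'} (A_j (\vex_1,\vex_2) \le \veb_j)$ after applying Lemma~\ref{lem:woods_dnf} (blowing $\delta$ up to $\delta' = \gamma^{O(n_1+n_2)}$, still a function of the parameters only, with coefficient/constant bounds polynomial in $\alpha,\beta$). Parametric ILP as stated in Corollary~\ref{cor:es_pilp} decides $\forall \vex_1 \in Q: \exists \vex_2: B(\vex_1,\vex_2)\le \vece$ for a \emph{single} system $B$, not a disjunction. The key step is therefore to collapse the disjunction $\bigvee_j \exists \vex_2: A_j(\vex_1,\vex_2)\le\veb_j$ into a single existentially-quantified conjunctive system, which is exactly what Lemma~\ref{lem:disjunctions} does \emph{provided the inner variables are box-bounded}. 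This is where $\beta$-smallness of $\Psi$ is used: because $\Psi$ is $\beta$-bounded, the quantifiers $\vex_1,\vex_2$ may be restricted to $[-\beta,\beta]^{n_i}$ without changing the truth value, and the DNF conversion preserves boundedness with only a polynomial blow-up of the bound (the new bound $B'$ is polynomial in $\beta,\gamma$). Hence I can add the explicit box constraints $\vex_1 \in [-B',B']^{n_1}$, $\vex_2 \in [-B',B']^{n_2}$, then invoke Lemma~\ref{lem:disjunctions} on the $\delta'$ systems $A_j(\vex_1,\vex_2)\le\veb_j$ (treating $\vex_1$ as part of the variable vector), obtaining a \emph{single} $B''$-small system $\tilde A(\vex_1,\vex_2,\vey)\le\tilde\veb$ with $\tilde\veb$ polynomial in the parameters, equivalent to $\bigvee_j A_j(\vex_1,\vex_2)\le\veb_j$ over the box. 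Now $\Psi$ is equivalent to $\forall \vex_1\in[-B',B']^{n_1}\, \exists (\vex_2,\vey): \tilde A(\vex_1,\vex_2,\vey)\le\tilde\veb$, and the universal box constraint $\vex_1\in[-B',B']^{n_1}$ is itself a conjunctive system $A\vex_1\le\veb$ — precisely the shape required by Corollary~\ref{cor:es_pilp}.

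Finally I would collect the bookkeeping: the number of inner variables is $n_2 + \delta' = n_2 + \gamma^{O(n_1+n_2)}$, the number of inner inequalities is $O(\delta'(N + 1))$ where $N \le \delta'\gamma$ bounds the original inequality count, the number of outer variables is $n_1$, and all coefficients and right-hand sides are bounded polynomially in $\alpha,\beta,\gamma$ and hence given in unary; Corollary~\ref{cor:es_pilp} then yields a running time $g(\ven,\delta,\gamma)\cdot\poly(\alpha,\beta)$ as claimed, where $g$ absorbs all the parameter-only blow-ups from Lemma~\ref{lem:woods_dnf} and Lemma~\ref{lem:disjunctions}. The main obstacle is the correct handling of boundedness through the DNF conversion — one must check that Woods's transformation (Lemma~\ref{lem:woods_dnf}) does not destroy $\beta$-boundedness, or argue separately that a $\beta$-bounded formula has an equivalent $B'$-bounded DNF for a controlled $B'$; everything else is an orchestration of the three cited black boxes.
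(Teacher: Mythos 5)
Your proposal is correct and follows essentially the same route as the paper's proof: negate so the leading quantifier is universal, apply Woods's DNF conversion (Lemma~\ref{lem:woods_dnf}), collapse the resulting disjunction into a single existential system via the big-$M$ construction of Lemma~\ref{lem:disjunctions} (using $\beta$-smallness to justify the bound $M$), and finish with the parametric ILP result of Corollary~\ref{cor:es_pilp}. Your extra care about the degenerate quantifier patterns and about boundedness surviving the DNF conversion is welcome but does not change the argument, which the paper carries out in the same way.
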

\begin{proof}
  Let $\Psi \equiv \exists \vex_1 \forall \vex_2: \Phi(\vex_1, \vex_2)$.
  Clearly, to decide $\Psi$ we can instead decide $\neg \Psi \equiv \forall \vex_1 \exists \vex_2: \neg \Phi(\vex_1, \vex_2)$.
  Consider the formula $\neg \Phi(\vex_1, \vex_2)$: by Lemma~\ref{lem:woods_dnf}, there exists an equivalent $\beta$-small DNF formula $\bar{\Phi}(\vex_1, \vex_2)$ such that the number of its disjunctions and conjunctions is a function of just the original $\delta$ and $\gamma$.

  Thus, from now on focus on the case $\Psi \equiv \forall \vex_1 \exists \vex_2: \Phi(\vex_1, \vex_2)$.
  Our next task is to construct an instance of \textsc{Parametric ILP} equivalent to deciding $\Psi$.
  To do that, replace $\Phi(\vex_1, \vex_2)$ with $\Xi(\vex_1, \vex_2) \equiv \exists \vex_3: A(\vex_1, \vex_2, \vex_3) \leq \veb$, where $\vex_3$ is of dimension $\delta$ and the system of linear inequalities $A(\vex_1, \vex_2, \vex_3) \leq \veb$ has bounded length, coefficients and right sides.
  Now we use Lemma~\ref{lem:disjunctions}.
  Assume that $\Phi(\vex_1, \vex_2)$ is a disjunction of $\delta$ linear systems, each of at most $\gamma$ conjunctions, and by the assumptions of the theorem we know that $\Psi$ is $\beta$-small.
  Plugging into Lemma~\ref{lem:disjunctions}, we have $B = \beta$, $d = \delta$, $m= \gamma$ and $n=n_1 + n_2$, and we obtain a formula $\exists \vex_3: A(\vex_1, \vex_2, \vex_3) \leq \veb$ equivalent to $\Phi(\vex_1, \vex_2)$.
  Thus, we are left with deciding $\forall \vex_1 \exists (\vex_2, \vex_3): A(\vex_1, \vex_2, \vex_3) \leq \veb$ with the following parameters:
  \begin{itemize}[leftmargin=*]
    \item The coefficients and right-hand sides $\|(A, \veb)\|_\infty$ are bounded by some computable function $f(\ven, \delta, \gamma) \cdot (\alpha \beta)^2$.
    \item The dimensions of $A$ are bounded by $f(\ven, \delta, \gamma)$.
  \end{itemize}
  Thus, we are in the setting of Corollary~\ref{cor:es_pilp} and we can decide the above sentence by a fixed-parameter algorithm.
\end{proof}

\begin{corollary}
\label{cor:main_cor}
  Let $\Psi(\vex_0)$ be a disjunction of $D$ many $\beta$-small $\exists \P_{2,(\tau, n_1, n_2),\delta,\gamma,\alpha, \beta}$ formulas.
  Then {\sc Minimum Move} with objective $f(\vex_0) = \vecc^{\transpose} \vex_0$ and winning condition $\Psi$ can be solved in time $g(\tau, n_1, n_2, \delta, \gamma) \poly(D,\alpha, \beta, \vecc)$ for some computable function $g$.
\end{corollary}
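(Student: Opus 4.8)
The plan is to reduce {\sc Minimum Move} with a disjunctive winning condition to at most $D$ applications of Theorem~\ref{thm:main_thm2} combined with a binary search over the objective value. First I would observe that since $\Psi(\vex_0)$ is a disjunction $\Psi(\vex_0) \equiv \bigvee_{i=1}^D \Psi_i(\vex_0)$ of $\beta$-small $\exists\P_{2,(\tau,n_1,n_2),\delta,\gamma,\alpha,\beta}$ formulas, a move $\vem$ is feasible for {\sc Minimum Move} if and only if it is feasible for the corresponding problem with $\Psi$ replaced by some single $\Psi_i$. Hence it suffices to solve, for each $i \in [D]$, the problem of finding a minimum-cost move $\vem$ subject to $\Psi_i(\ves + \Delta(\vem)) \wedge \texttt{feasible}(\ves, \Delta(\vem))$, and then take the best of the $D$ answers; this costs only a factor $D$ in the running time.

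Next I would turn the optimization into a sequence of decision problems. Note that $\vecc^{\transpose}\vem$ is bounded: the entries of $\vecc$ are bounded by $\|\vecc\|_\infty$, any reasonable move has polynomially many nonzero entries bounded by $\|\ves\|_\infty$, and $+\infty$ entries simply forbid certain transitions, so the relevant range of objective values has size $\poly(\|\vecc\|_\infty, \|\ves\|_\infty)$ and can be searched by binary search in $\poly$ time, contributing only a $\log$ factor. For a fixed target value $k$, I introduce fresh quantified vectors $\vem \in \Z^{\tau^2}$ and $\veDelta \in \Z^{\tau}$, and form the sentence
\[
  \exists \vem \, \exists \veDelta : \Big( \texttt{move}(\vem) \wedge \veDelta = \Delta(\vem) \wedge \texttt{feasible}(\ves, \veDelta) \wedge (\vecc,k)\text{-}\texttt{move}(\vem) \wedge \Psi_i(\ves + \veDelta) \Big).
\]
Using the \texttt{Vocabulary} paragraph, the first four conjuncts are quantifier-free $\P_0$ formulas with bounded length and bounded (indeed polynomially small) coefficients; and $\Psi_i(\ves + \veDelta)$, after the affine substitution $\vex_0 \mapsto \ves + \veDelta$ which only shifts right-hand sides by at most $\|\ves\|_\infty \cdot \alpha$, is an $\exists\P_2$ formula in the variables $\veDelta$ (now playing the role of free variables that we existentially quantify) whose $\delta,\gamma$ are unchanged and whose $\alpha,\beta$ grow only polynomially. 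Combining the outer $\exists \vem \exists \veDelta$ with the leading $\exists$ of $\Psi_i$ and its inner $\exists/\forall$, and padding with a dummy variable if the quantifier pattern needs to be exactly $\exists\forall$, yields a $\beta'$-small $\P_{2,\ven',\delta,\gamma,\alpha',\beta'}$ \emph{sentence} (no free variables), with $\ven', \delta, \gamma$ bounded by a function of $(\tau, n_1, n_2, \delta, \gamma)$ and $\alpha', \beta'$ polynomial in $(\alpha, \beta, \|\vecc\|_\infty, \|\ves\|_\infty)$. Theorem~\ref{thm:main_thm2} decides this in time $g'(\ven', \delta, \gamma)\poly(\alpha',\beta')$, which is of the claimed form.

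The main obstacle I expect is bookkeeping rather than mathematics: I must verify carefully that after the substitution $\vex_0 \mapsto \ves + \veDelta$ and after merging the three conjuncts with the body of $\Psi_i$ via Lemma~\ref{lem:disjunctions}-style reasoning, the resulting formula still lies in the parameter class required by Theorem~\ref{thm:main_thm2} — in particular that it is $\beta'$-\emph{small}, i.e. $\beta'$-bounded. Since $\Psi_i$ was $\beta$-bounded and the added constraints $\texttt{feasible}$ and $(\vecc,k)$-$\texttt{move}$ only further restrict the solution set to moves with bounded support and bounded entries (bounded in terms of $\ves$, $\vecc$, $k$), one can take $\beta'$ to be a suitable polynomial in $\beta, \|\ves\|_\infty, \|\vecc\|_\infty, k$ and check that no feasible solution is lost by restricting all quantifiers to the box of that radius. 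Once this is confirmed, the running time is $\sum_{i=1}^D g'(\cdot)\poly(\cdot) \cdot \log(\text{range}) = g(\tau,n_1,n_2,\delta,\gamma)\poly(D,\alpha,\beta,\vecc)$ as desired, and recovering the optimal move itself (not just its value) follows by the standard self-reduction of fixing the transitions one at a time.
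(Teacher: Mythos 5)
Your proposal is correct and follows essentially the same route as the paper's proof: split the disjunction into $D$ subproblems, binary-search on the budget $k$ via the $(\vecc,k)$-\texttt{move} constraint, fold the existential move variables into the leading $\exists$ of each $\Psi_i$ to obtain a $\beta'$-small depth-$2$ sentence of dimension governed by $(\tau,n_1,n_2)$, decide it with Theorem~\ref{thm:main_thm2}, and recover the optimal move by coordinate-wise self-reduction. Your extra bookkeeping on $\beta'$ (tracking $\|\ves\|_\infty$ and $k$ explicitly) is if anything slightly more careful than the paper's one-line claim that $\beta'=\poly(\beta,\tau+n_1+n_2,\|\vecc\|_\infty)$.
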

\begin{proof}
  Let $\Psi \equiv \Psi_1 \vee \cdots \vee \Psi_D$, and let $\Xi_i(\vem) \equiv \Psi_i(\ves+\veDelta(\vem)) \wedge (\vecc, B)\textrm{-}\texttt{move}(\vem) \wedge \texttt{feasible}(\veDelta(\vem), \ves)$, for each $i \in [D]$.
  Now we need to perform binary search with parameter $B$ on the sentence
  \[
    \exists \vem \Psi(\ves+\veDelta(\vem) \wedge (\vecc, B)\textrm{-}\texttt{move}(\vem) \wedge \texttt{feasible}(\veDelta(\vem), \ves) \enspace .
  \]
  This sentence is obviously equivalent to
  \[
    \exists \vem \Xi_1(\vem) \vee \cdots \vee \exists \vem \Xi_D(\vem), 
  \]
  which can be decided by $D$ application of Theorem~\ref{thm:main_thm2}.
  This is possible in the claimed time, as, for each $i \in [D]$, $\exists \vem \Xi_i(\vem)$ has quantifier depth $2$, is $\beta'$-small for $\beta' = \poly(\beta, \tau + n_1 + n_2, \|\vecc\|_\infty)$ and has dimension $\ven = (0, \tau + n_1, n_2)$.
  A minimum move can then be constructed by coordinate-wise binary search; cf.~\cite[Thm. 4.1]{KoppeEtAl2010}.
\end{proof}

\subsection{Application: Swap Bribery for the Dodgson Rule and Young Rule}
\label{sec:swapbriberyforthedodgsonruleandyoungrule}
We now prove Theorem~\ref{thm:main-fpt-swapbribery} by giving our fixed-parameter algorithm for $\RR$-{\sc Swap Bribery} with~$\RR$ the Dodgson Rule or the Young Rule.

\begin{proof}[Proof of Theorem~\ref{thm:main-fpt-swapbribery}]
  Fix an instance $(C,V,c^{\star}, \sigma)$ of $\RR$-\textsc{Swap Bribery}.
  As there are $|C|!$ possible total orders on $C$, each voter has one of these orders.
  Thus we view the election as a society $\ves = (s_1, \dots, s_{|C|!})$.
  Let $\vecc_{\textnormal{swap}}$ be a move costs vector defined as $c_{i,j} = $``swap distance between types~$i$ and~$j$'' (this is simply the number of inversions between the permutations~$i$ and~$j$~\cite[Proposition 3.2]{ElkindEtAl2009}); observe that a society $\ves$ is in swap distance at most $d$ from $\ves'$ if $\ves' = \ves + \Delta(\vem)$ and~$\vem$ is a feasible $(\vecc_{\textnormal{swap}}, d)$-move for $\ves$.
  Moreover, let $\vecc_{\textnormal{del}}$ be defined as $c_{i, 0} = 1$ for every type $i$ and $c_{i,j} = +\infty$ for every two types $i,j \neq 0$, where $0$ is a type for latent voters; observe that $\ves$ is in voter deletion distance at most $d$ from $\ves'$ if $\ves' = \ves + \Delta(\vem)$ and $\vem$ is a $(\vecc_{\textnormal{del}}, d)$-\texttt{move}.

  Our plan is to express the winning condition for $\RR$ using a PA formula $\Psi_{\textnormal{Dodgson}}(\ves)$ which is a disjunction of polynomially many (in $|V|$ and $|C|$) formulas from $\exists \P_{2, \ven, \delta, \gamma, \alpha, \beta}$, and then solve {\sc Minimum Move} with $f(\vem) = \bm{\sigma}^{\transpose} \vem$ and $\Psi = \Psi_{\textnormal{Dodgson}}(\ves)$ using Corollary~\ref{cor:main_cor}; recall that since the instance is uniform, we have $\sigma^v \equiv \sigma$ for all $v \in V$, and we let $\bm{\sigma}$ be the move costs vector obtained from $\sigma$.
  For $\Psi_{\textnormal{Young}}$, we proceed analogously.
  In the end, we will verify that $\ven, \delta, \gamma$ are bounded by a function of $|C|$, and that $\alpha, \beta$ are polynomial in the input size.

  \heading{Expressing $\Psi_{\textnormal{Dodgson}}$ and\/ $\Psi_{\textnormal{Young}}$}
  The winning condition for candidate $c^{\star}$ in both the Dodgson and Young rule can be viewed as~$c^{\star}$ being closest to being a~Condorcet winner with respect to some distance measure.
  Specifically, for the Dodgson rule, this distance is the number of swaps in the preference orders, and in the Young rule, the distance is the number of voter deletions.
  For this reason, finding a~bribery which makes $c^{\star}$ a~winner corresponds to finding a~bribery which, for some $d \in \N$, makes $c^{\star}$ be in distance $d$ or less from being a~Condorcet winner, and, simultaneously, making every other candidate $c \neq c^{\star}$ be in distance at least $d$ from being a~Condorcet winner.

  Let us fix $d \in \N$; later we will argue that we can go through all $D$ relevant choices of $d$.
  First, we will express the condition that in a society $\ves$, candidate $c$ beats candidate $c'$ in a head-to-head contest: 
  \[
    \texttt{beats}(c, c', \ves) \equiv \sum_{i: c \pref_i c'} s_i > \sum_{i: c' \pref_i c} s_i,
  \]
  where $\pref_i$ is the preference order shared by all voters of type $i$.
  Then, it is easy to express that $c^{\star}$ is a winner in $\ves$ under Condorcet's rule:
  \[
    \Phi_{\textnormal{Condorcet}}(c, \ves) \equiv \bigwedge_{c' \neq c} \texttt{beats}(c, c', \ves) \enspace .
  \]
  Finally, we express that $c^{\star}$ is a Dodgson-winner with score $d$ by $\Psi_{\textnormal{Dodgson},d}$ as follows:
  \begin{align*}
    \Psi_{\textnormal{Dodgson},d}(\ves) \equiv & \big(\exists \vem: (\vecc_{\textnormal{swap}}, d)\textrm{-}\texttt{move}(\vem) \wedge \texttt{feasible}(\Delta(\vem), \ves) \wedge \\
    & \quad \Phi_{\textnormal{Condorcet}}(c^\star, \ves + \Delta(\vem)) \big) \wedge \\
    & \big(\forall \vem: (\vecc_{\textnormal{swap}}, d-1)\textrm{-}\texttt{move}(\vem) \wedge \texttt{feasible}(\Delta(\vem), \ves) \wedge \\
    & \quad \bigwedge_{c \neq c^{\star}} \neg \Phi_{\textnormal{Condorcet}}(c, \ves + \Delta(\vem)) \big)
  \end{align*}
  Then, $\Psi_{\textnormal{Dodgson}} \equiv \Psi_{\textnormal{Dodgson},1} \vee \cdots \vee \Psi_{\textnormal{Dodgson},D}$.
  Clearly, $\Psi_{\textnormal{Young},d}$ is obtained simply by replacing $\vecc_{\textnormal{swap}}$ with $\vecc_{\textnormal{del}}$.

  \heading{Complexity}
  The number $D$ of relevant choices of $d$ is bounded by $|C|^2 |V|$:
  at most $|C|^2$ swaps suffice for any bribery of a single voter, and there are at most $|V|$ voters, thus $D \leq |C|^2 |V|$ is a bound on the Dodgson score of a candidate.
  For the Young score, $D \leq |V|$.

  Since for each $d \in [D]$, $\Psi_{\textnormal{Dodgson},d} \equiv \exists \vem: \Xi_1(\vem) \wedge \forall \vem: \Xi_2(\vem)$ can be equivalently rewritten as $\exists \vem \forall \vem': \Xi_1(\vem) \wedge \Xi_2(\vem')$, it has quantifier depth $2$, and thus belongs to $\exists \P_{2, \ven, \delta, \gamma, \alpha, \beta}$.
  Let us determine the parameters:
  \begin{itemize}[leftmargin=*]
    \item $\ven$ is the vector of dimensions; thus $n_0 = \tau = |C|!$ are the dimensions of a vector encoding a society, and $n_1 = \tau^2 = |C|!^2$ are the dimensions of a vector encoding a move.
    \item $\delta$ is the number of disjunctions, and it is polynomial in $|C|$,
    \item $\gamma$ is the number of conjunctions, and it is polynomial in $\tau = |C|!$,
    \item $\alpha$ is the largest coefficient, which is $\|\vecc_{\textnormal{swap}}\|_\infty \leq |C|^2$,
    \item $\beta$ is the largest right-hand side, which is $d \leq |C|^2 |V|$.
  \end{itemize}
  Thus, $\ven$, $\delta,\gamma$ are functions of the parameter $|C|$ and $\alpha,\beta$ are polynomial in the size of the input election $(C,V,\{\pref_v\mid v\in V\})$, as required by Corollary~\ref{cor:main_cor}.
  Analogous analysis applies to $\Psi_{\textnormal{Young},d}$.
\end{proof}

Replacing $\vecc_{\textnormal{swap}}$ with $\mathbf{1}^\transpose$ produces the winning condition for Dodgson$'$ as introduced by Faliszewski et al.~\cite{FaliszewskiEtAl2009}.
Furthermore, it is interesting to consider voting rules obtained by replacing Condorcet's rule in the definition of Dodgson's and Young's rule.
For example, the Majority rule also might not produce a winner, and most rules (Copeland$^\alpha$, Scoring protocol etc.) allow ties.
Let the score of a candidate be their distance (swap, deletion, etc.) from being a (unique) winner under rule $\RR$.
We remark that if we replace Condorcet's rule with $\RR$ with a ``simple'' PA desciription, it corresponds to replacing $\Phi_{\textnormal{Condorcet}}$ in the proof above, thus yielding fixed-parameter tractable algorithms for all such rules as well.

\subsection{Application: Resilient and Robust Moves}
\label{subsec:robustresilient}
Theorem~\ref{thm:main_thm} allows us to develop fixed-parameter algorithms for problems related to moves in society.
Problem \textsc{Robust Move} asks for a move that is robust to any adversary move of cost at most~$B_a$:
\prob{\textsc{Robust Move}}
{A society $\ves$, move costs vectors $\vecc$ and $\vecc_a$, a winning condition $\Psi$, budgets $B , B_a \in \N$}
{A $(\vecc, B)$-move $\vem$ such that for every adversary $(\vecc_a, B_a)$-move $\vem_a$, $\Psi(\ves + \Delta(\vem_a) + \Delta(\vem))$ holds.}

The second problem, \textsc{Resilient Budget}, asks if a budget $B$ suffices to counter any adversary move of cost at most~$B_a$.
Crampton et al.~\cite{CramptonEtAl2017} consider a specialization of this problem for $\RR$-\textsc{Swap Bribery}. 

\prob{\textsc{Resilient Budget}}
{A society $\ves$, move costs vectors $\vecc$ and $\vecc_a$, a winning condition $\Psi$, budgets $B , B_a \in \N$}
{Does for every adversary $(\vecc_a, B_a)$-move $\vem_a$ exist a $(\vecc, B)$-move $\vem$ such that $\Psi(\ves + \Delta(\vem_a) + \Delta(\vem))$?}

\begin{theorem}
\label{thm:robustresilient}
  \textsc{Robust Move} and \textsc{Resilient Budget} with $\Psi \in \P_{0, (n_0), \delta, \gamma, \alpha, \beta}$ can be solved in time $g(\tau, n_0, \delta, \gamma) (\alpha + \beta + B + B_a + \|\vecc\|_\infty + \|\vecc_a\|_\infty)^{\Oh(1)}$, that is, \FPT parameterized by $\tau + n_0 + \delta + \gamma$.
\end{theorem}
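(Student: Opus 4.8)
The plan is to express each of \textsc{Robust Move} and \textsc{Resilient Budget} as a Presburger sentence of quantifier depth two whose quantifier-free matrix is a DNF with boundedly many disjuncts and conjuncts, and then to apply Theorem~\ref{thm:main_thm2}. Write $\Xi_{\mathrm{own}}(\vem) \equiv (\vecc, B)\textrm{-}\texttt{move}(\vem)$, $\Xi_{\mathrm{adv}}(\vem_a) \equiv (\vecc_a, B_a)\textrm{-}\texttt{move}(\vem_a)$, and $\veDelta_{\mathrm{tot}} \equiv \Delta(\vem) + \Delta(\vem_a)$. Then \textsc{Robust Move} is a ``yes''-instance iff
\[
  \exists \vem \, \forall \vem_a :\ \Xi_{\mathrm{own}}(\vem) \wedge \Big( \Xi_{\mathrm{adv}}(\vem_a) \longrightarrow \big( \texttt{feasible}(\ves, \veDelta_{\mathrm{tot}}) \wedge \Psi(\ves + \veDelta_{\mathrm{tot}}) \big) \Big),
\]
while \textsc{Resilient Budget} is a ``yes''-instance iff
\[
  \forall \vem_a \, \exists \vem :\ \neg\Xi_{\mathrm{adv}}(\vem_a) \vee \big( \Xi_{\mathrm{own}}(\vem) \wedge \texttt{feasible}(\ves, \veDelta_{\mathrm{tot}}) \wedge \Psi(\ves + \veDelta_{\mathrm{tot}}) \big),
\]
the precise placement of the $\texttt{feasible}$ atoms being whatever the problem definition demands, which is immaterial to the argument.

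Both matrices are quantifier-free. Since $\Delta$ is a linear map and $\ves$ is a constant vector, substituting $\ves + \veDelta_{\mathrm{tot}}$ into $\Psi$ turns each inequality $\vea^\transpose\vex\le b$ of $\Psi$ into a linear inequality in $\vem, \vem_a$ whose coefficients grow by a factor at most $2$ and whose right-hand side shifts by $-\vea^\transpose\ves$; the same holds for $\texttt{feasible}$. As $\Xi_{\mathrm{adv}}$ is a conjunction of linear inequalities, $\neg\Xi_{\mathrm{adv}}$ is a disjunction of single-literal inequalities (using integrality to rewrite $\vea^\transpose\vex>b$ as $-\vea^\transpose\vex\le -b-1$); hence the implication, respectively $\neg\Xi_{\mathrm{adv}}\vee(\cdots)$, is a DNF, and conjoining with the conjunctions $\Xi_{\mathrm{own}}$ and $\texttt{feasible}$ and distributing keeps the matrix a DNF whose number of disjuncts and conjuncts is bounded by a computable function of $(\tau, n_0, \delta, \gamma)$ (if preferred, Lemma~\ref{lem:woods_dnf} provides a uniform renormalization). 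Thus each sentence lies in $\P_{2,(0,\tau^2,\tau^2),\delta',\gamma',\alpha',\beta'}$ with $\delta',\gamma'$ bounded in $(\tau, n_0, \delta, \gamma)$ and $\alpha',\beta' = \poly(\alpha,\beta,B,B_a,\|\vecc\|_\infty,\|\vecc_a\|_\infty)$ (and the input society). To make the sentence $\beta'$-small we bound the quantifiers: the diagonal entries of $\vem,\vem_a$ do not affect $\Delta$ and may be fixed to $0$, and restricting to moves that are minimal for the change they induce gives explicit box bounds on all coordinates of $\vem$ and $\vem_a$; adding these boxes changes no truth values, exactly as in the proof of Corollary~\ref{cor:main_cor}.

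Now Theorem~\ref{thm:main_thm2} applies to each sentence --- it decides $\exists\forall$ sentences directly and $\forall\exists$ sentences by negating to $\exists\forall$ --- and decides it in time $g(\tau,n_0,\delta,\gamma)\poly(\alpha,\beta,B,B_a,\|\vecc\|_\infty,\|\vecc_a\|_\infty)$, i.e., \FPT parameterized by $\tau+n_0+\delta+\gamma$. For \textsc{Robust Move}, when the answer is ``yes'' an explicit robust move is recovered by fixing the coordinates of $\vem$ one at a time and re-deciding the resulting depth-two sentence after each choice, precisely the self-reduction used for Corollary~\ref{cor:main_cor}.

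The argument is conceptually routine; the one place that needs care is making the constructed sentence $\beta'$-small, i.e., bounding the ranges of $\vem$ and $\vem_a$ --- straightforward once trivial diagonal moves are discarded and one restricts to moves minimal for their induced change, but slightly delicate if some off-diagonal move costs vanish. With this in hand, the rest is a single invocation of Theorem~\ref{thm:main_thm2} per problem.
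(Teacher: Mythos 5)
Your proposal is correct and follows essentially the same route as the paper: both problems are written as depth-two Presburger sentences ($\exists\vem\,\forall\vem_a$ for \textsc{Robust Move}, $\forall\vem_a\,\exists\vem$ for \textsc{Resilient Budget}) with the cost and feasibility constraints folded into the matrix, and then decided by a single invocation of Theorem~\ref{thm:main_thm2}. The paper's own proof is just the two displayed formulas with bounded-range quantifiers (relying on its earlier remark that such quantifiers can be rearranged into the standard form); your additional care about the implication under the universal quantifier, the DNF normalization, and the $\beta'$-smallness of the resulting sentence only makes explicit what the paper leaves implicit.
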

\begin{proof}
  We apply Theorem~\ref{thm:main_thm} to decide the following formulas, which is clearly equivalent to deciding the problems at hand:
  \begin{align*}
    \Psi(\ves)_{RB} \equiv & \exists \vem: \texttt{feasible}(\ves, \Delta(\vem)) \wedge (\vecc, B)\textrm{-}\texttt{move}(\vem)~ \wedge \\
                           & \forall \vem_a: \texttt{feasible}(\ves + \Delta(\vem), \Delta(\vem_a)) \wedge (\vecc_a, B_a)\textrm{-}\texttt{move}(\vem_a)~ \wedge \\ & \qquad \Psi(\ves + \Delta(\vem) + \Delta(\vem_a)) \\
   \Psi(\ves)_{RM} \equiv  & \forall \vem_a: \texttt{feasible}(\ves, \Delta(\vem_a)) \wedge (\vecc_a, B_a)\textrm{-}\texttt{move}(\vem_a)~ \wedge \\
                           &\exists \vem: \texttt{feasible}(\ves + \Delta(\vem_a), \Delta(\vem)) \wedge (\vecc, B)\textrm{-}\texttt{move}(\vem)~ \wedge \\
                           & \qquad \Psi(\ves + \Delta(\vem) + \Delta(\vem_a)) \qedhere
  \end{align*}
\end{proof}

\section{Polynomially Many Types}
We now prove Theorem~\ref{thm:polymanytypes-informal}, which formally reads:
\begin{theorem}
\label{thm:poly_types}
  Let $\Psi(\ves) \in \P_{0,(\tau, n_1),\delta,\gamma,\alpha, \beta}$  be a winning condition.
  {\sc Minimum Move} can be solved in time $g(\gamma, \alpha) (\tau + \delta)^{\Oh(1)} \log(\ves)$ for any linear function $f(\vem) = \vecc^{\transpose} \vem$, where $g$ is some computable function.
\end{theorem}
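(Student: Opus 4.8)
The plan is to exploit the fact that although the society vector $\ves$ lives in dimension $\tau$, which may be large, the winning condition $\Psi$ is quantifier-free and in disjunctive normal form with only $\gamma$ inequalities per disjunct and coefficients bounded by $\alpha$. The key idea, following the philosophy that ``descriptive complexity of $\Psi$ is what matters'', is to branch over the $\delta$ disjuncts of $\Psi$ and, within each disjunct, set up an integer program in the move variables $\vem \in \Z^{\tau^2}$ whose \emph{constraint matrix has a highly structured form} — namely an $n$-fold (or, dually, a two-stage stochastic) block structure — so that it can be solved in time polynomial in $\tau$ and logarithmic in $\|\ves\|_\infty$. First I would write down, for each disjunct $j \in [\delta]$, the system consisting of: (i) $\mathtt{feasible}(\ves, \Delta(\vem))$, i.e.\ $\ves + \Delta(\vem) \ge \mathbf 0$ and $\mathbf 1^\transpose \Delta(\vem) = 0$; (ii) the $\gamma$ inequalities of the $j$-th conjunction applied to $\ves + \Delta(\vem)$; and (iii) the objective $\vecc^\transpose \vem$ to be minimized. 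Taking the best over the $\delta$ disjuncts gives the optimum.

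The crucial structural observation is that $\vem$ naturally decomposes into $\tau$ blocks $\vem^{(i)} = (m_{i,1}, \dots, m_{i,\tau})$, one per source type, and the only constraints \emph{linking} different blocks are (a) the $\tau$ feasibility constraints $\Delta_i(\vem) + s_i \ge 0$ together with $\mathbf 1^\transpose \Delta(\vem)=0$, and (b) the $\gamma$ inequalities of the chosen disjunct, each of which is a linear form in $\ves + \Delta(\vem)$ and hence a linear form in the $\tau^2$ entries of $\vem$. Thus each block $\vem^{(i)}$ is coupled to the rest only through $\Oh(\tau + \gamma)$ globally-linking rows, while locally each block has $\tau$ variables and the only purely-local constraint is nonnegativity-type bounds (people of type $i$ cannot move in negative quantity, and at most $s_i$ of them can leave). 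This is exactly the shape of an $n$-fold integer program with $n = \tau$ bricks, brick size $\tau$, number of linking constraints $\Oh(\tau+\gamma)$, and largest coefficient bounded by $\max(\alpha,\|\vecc\|_\infty, 1)$. Hence I would invoke the $n$-fold IP algorithm (as used already by Knop et al.~\cite{KnopEtAl2017}), whose running time is $g'(\text{block dimensions}, \text{coefficient bound}) \cdot (\text{number of bricks})^{\Oh(1)} \cdot \log(\text{largest right-hand side})$; here the block dimensions and coefficient bound are functions of $\gamma$ and $\alpha$ only, the number of bricks is $\tau$, and the largest right-hand side is bounded by $\|\ves\|_\infty$ and $\|\vecc\|_\infty B$, giving overall $g(\gamma,\alpha)\cdot(\tau+\delta)^{\Oh(1)}\log(\ves)$ after summing over the $\delta$ disjuncts. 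A minor technical point I would address is that the right-hand sides $\beta$ and the costs $\vecc$ may be large (binary), but this only enters logarithmically, consistent with the claimed bound; and that the $\delta$ disjuncts contribute only a multiplicative $\delta$ (absorbed into $(\tau+\delta)^{\Oh(1)}$).

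The main obstacle I anticipate is verifying the block-structure bookkeeping precisely: one must check that after moving the feasibility equality $\mathbf 1^\transpose \Delta(\vem)=0$ and the per-type bounds into the right places, the constraint matrix genuinely fits the $n$-fold template (a fixed ``linking'' block $A_1$ repeated, and a fixed ``local'' block $A_2$ repeated), rather than merely being block-sparse. In particular, the map $\vem \mapsto \Delta(\vem)$ mixes coordinates across blocks — $\Delta_i$ receives a contribution $m_{j,i}$ from \emph{every} block $j$ — so the global rows are dense in the block index, which is fine for $n$-fold, but one must be careful that the \emph{local} constraints (within brick $i$) do not secretly depend on $i$ in a way that breaks uniformity; they do depend on $s_i$, but only through the right-hand side, which $n$-fold IP permits. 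Once this is laid out cleanly, the remaining steps — plugging in the $n$-fold running time, and recovering an optimal move by the algorithm's output — are routine.
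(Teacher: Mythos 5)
There is a genuine gap in your argument, and it sits exactly at the point you flag as ``bookkeeping''. In your formulation each brick $\vem^{(i)}=(m_{i,1},\dots,m_{i,\tau})$ has $\tau$ variables, and the linking rows number $\Oh(\tau+\gamma)$ (the $\tau$ feasibility constraints $s_i+\Delta_i(\vem)\ge 0$ plus the $\gamma$ inequalities of the disjunct). The parameter dependence of every known $n$-fold IP algorithm is exponential in the \emph{block dimensions} (brick size and number of linking rows) and the coefficient bound; only the number of bricks enters polynomially. So your claim that ``the block dimensions and coefficient bound are functions of $\gamma$ and $\alpha$ only'' contradicts your own setup: with brick size $\tau$ and $\Theta(\tau)$ linking rows, the $n$-fold solver runs in time $g'(\tau,\gamma,\alpha)\cdot\tau^{\Oh(1)}\cdots$, i.e.\ you get an algorithm that is FPT in $\tau$ but \emph{not} polynomial in $\tau$ --- which defeats the entire point of Theorem~\ref{thm:poly_types}.

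The missing idea is a column-equivalence reduction that collapses the brick size to a function of $\gamma$ and $\alpha$ alone. Fix a disjunct $A\vey\le\veb$ with $A\in\Z^{\gamma\times\tau}$ and $\|A\|_\infty\le\alpha$: the matrix $A$ has at most $C\le\alpha^{\Oh(\gamma)}$ distinct columns, so the $\tau$ destination types partition into $C$ equivalence classes $i\sim j$ iff $A_i=A_j$. Moving a person to type $j$ affects the constraints only through the class of $j$, so it is always optimal to move to the cheapest representative of the target class; one therefore replaces $\vecc$ by reduced cost vectors $c^i_{i,j}=\min_{j'\in\CC[j]}c_{i,j'}$ and uses variables $\vex^i\in\Z^{C}$ counting how many people of type $i$ go to each class. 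The resulting program $\min\bar\vecc^\transpose\vex$ s.t.\ $\sum_i\bar A\vex^i\le\veb$ and $\sum_j x^i_j=s_i$ for all $i$ has $\gamma$ linking rows, brick size $C$, a single purely local equation per brick (note that $\ves+\Delta(\vem)\ge 0$ becomes local rather than linking in this formulation), and coefficients bounded by $\alpha$; it is a combinatorial pre-$n$-fold IP solvable in time $g(\gamma,\alpha)\,\tau^{\Oh(1)}\log(\ves)$. Without this collapse your plan does not yield the stated bound.
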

\begin{proof}
  Since $k=0$, $\Phi(\vey)$ is a disjunction of $\delta$ linear systems $A_i \vey \leq \veb_i$, with $\|A_i\|_\infty \leq \alpha$ and with at most $\gamma$ rows, for every $i \in [\delta]$.
  Thus we instead solve $\delta$ instances of {\sc Minimum Move} with $\Phi_i(\vey) \equiv A_i \vey \leq \veb_i$ and pick the best solution among them.
  So from now on assume that $\Phi(\vey) \equiv A \vey \leq \veb$ with $\|A\|_\infty \leq \alpha$ and $A \in \Z^{\gamma \times \tau}$.

  Observe that $A$ can have at most $\alpha^{\Oh(\gamma)}$ different columns.
  For two types $i, j \in [\tau]$, we say they are equivalent and write $i \sim j$ if the columns $A_i$ and $A_j$ are identical.
  Thus, the $\tau$ types of people fall into $C \leq \alpha^{\Oh(\gamma)}$ equivalence classes.
  For every type $i \in [\tau]$, let $\CC[i] = \{j \in [\tau] \mid j \sim i\}$ be the equivalence class containing $i$, and let $\bar{A} \in \Z^{\gamma \times C}$ be a matrix with, for every $i \in [C]$, $\bar{A}_{i} = A_j$ where $j \in \CC[i]$.
  Now, for every type $i \in [\tau]$, we shall create a \emph{reduced custom move costs vector} $\vecc^i \in \N^{C^2}$.
  For every $j \in [C]$, $i \neq j$, let 
  \[
    c^i_{i,j} = \min_{j' \in \CC[j]} c_{i,j'}
  \]
  be the cost of moving from $i$ to the cheapest equivalent of $j$, and let $c^i_{j,k} = +\infty$ for any $i \neq j, k \in [C]$.
  Let $\bar{\vecc} = (\vecc^1, \dots, \vecc^\tau)$
 
  Then, consider the following ILP with variables $\vex = (\vex^1, \dots, \vex^\tau) \in \Z^{\tau C}$; we obtain the minimum move $\vem$ from its optimal solution by taking $m_{i,j} = x^i_{j'}$ where $j' \sim j$:
  \begin{align*}
    \min \bar{\vecc}^{\transpose} \vex \quad \mbox{s.t.}
        \sum_{i=1, \ldots, \tau} \bar{A} \vex^i \leq \veb,~
		\sum_{j=1, \ldots, C} x^i_j = s_i   \qquad  \forall i \in [\tau] \enspace .
  \end{align*}
  This ILP is a \emph{combinatorial pre-$n$-fold IP}; by Knop et al.~\cite{KnopEtAl2017b} (detailed in~\cite[Corollary 23]{KnopEtAl2017c}), it can be solved in the claimed time.
\end{proof}

\section{Discussion}
We raise three important questions which naturally arise from this work.
First, we ask whether our fixed-parameter algorithm for $\{\mbox{Dodgson,Young}\}$\hy \textsc{Swap Bribery} with uniform cost extends to general cost functions.
This parallels \cite[Challenge \#2]{BredereckEtAl2014}.
For much simpler voting rules an analogous result was shown only recently~\cite{KnopEtAl2017}.
We believe that if the answer is positive, proving it would require providing new powerful integer programming tools, in particular, some analogue of Theorem~\ref{thm:main_thm} for $n$-fold integer programming, which is the engine behind the recent progress~\cite{KnopEtAl2017}.

Second, can the run time of our algorithm be improved?
This analogously recalls \cite[Challenge \#1]{BredereckEtAl2014}.
The run time of Corollary~\ref{cor:main_cor} is double-exponential in the dimension $\Oh(\tau)$; thus Theorem~\ref{thm:main-fpt-swapbribery} shows that Dodgson-\textsc{Swap Bribery} is solvable in triple-exponential time in parameter $|C|$.
We believe it can be improved, but we are sceptical that it could be made single-exponential, and thus ask if a double-exponential lower bound holds.
A related question is to show lower bounds separating the complexity of $\RR$-\textsc{Swap Bribery} for different voting rules such as Borda, Copeland and Kemeny.

Third, we note that, unlike most previous results, the run time of our algorithm depends polynomially on the number $|V|$ of voters.
Many previous results solve the \emph{succinct} variant of the problem (cf. Falizewski et al.~\cite{FaliszewskiEtAl2009}) and  depend polynomially only on $\log |V|$.
Thus we ask whether $\{\mbox{Dodgson,Young}\}$\hy \textsc{Swap Bribery} is \fpt also in the succinct variant.

Finally, we discuss our usage of Presburger arithmetic as a generalization of ILPs.
Solving ILPs amounts to deciding $\exists \vex: A \vex \leq \veb$; by Lenstra's algorithm~\cite{Lenstra1983} and its improvements by Kannan~\cite{Kannan1987} and Frank and Tardos~\cite{FrankTardos1987}, this task is fixed-parameter tractable parameterized by the dimension of $\vex$ even for unbounded $\|A, \veb\|_\infty$ and if~$A$ has polynomially (in the length of the input) many rows.
In 1990, Kannan claimed to show that \textsc{Parametric ILP}, which amounts to deciding $\forall \veb \in Q~\exists \vex: A \vex \leq \veb$ for some polyhedron~$Q$, is fixed-parameter tractable parameterized by the dimension of $\vex$; here, $\|A\|_\infty$ must be bounded by a polynomial and the number of rows of $A$ also has to be a parameter.
However, Kannan's result relies on \emph{Kannan's Partitioning Theorem} (KPT), which was recently \emph{disproved} by Nguyen and Pak~\cite{NguyenPak2017c}.
Nguyen and Pak~\cite[Theorem 1.9]{NguyenPak2017c} state that Woods~\cite{Woods2015} gave a polynomial-time algorithm for deciding $\forall \vey \exists \vex: \Phi(\vex, \vey)$ when the dimensions of $\vex$ and $\vey$ are \emph{constant}; however, it is unclear if this is a fixed-parameter algorithm.
For this reason, as we have the dimensions of $\vex$ and $\vey$ as (non-constant) \emph{parameter}, we chose to prove Theorem~\ref{thm:main_thm} as a slightly weaker result (still sufficient for our purposes) but using only elementary techniques.

\bibliography{comsoc}

\end{document}